\newtheorem{theorem}{Theorem}[section]
\newtheorem{corollary}[theorem]{Corollary}
\newtheorem{lemma}[theorem]{Lemma}
\newtheorem{claim}[theorem]{Claim}
\theoremstyle{definition}
\newtheorem{definition}[theorem]{Definition}
\theoremstyle{remark}
\newtheorem*{remark}{Remark}
\newtheorem{example}{Example}
\numberwithin{equation}{section}
\title{Do Capacity Constraints Constrain Coalitions?\thanks{An earlier version of this work has appeared in the Proceedings of the 29th AAAI Conference on Artificial Intelligence (AAAI-15). This work was partially supported by the European Research Council under the European Union's Seventh Framework Programme (FP7/2007-2013) / ERC grant agreement number 337122.}}
\author{Michal Feldman\\
Blavatnik School of Computer Science\\
Tel Aviv University
\and
Ofir Geri\\
Blavatnik School of Computer Science\\
Tel Aviv University}
\date{}
\begin{document}

\maketitle
\begin{abstract}
We study strong equilibria in symmetric capacitated cost-sharing connection games.
In these games, a graph with designated source $s$ and sink $t$ is given, and each edge is associated with some cost.
Each agent chooses strategically an $s$-$t$ path, knowing that the cost of each edge is shared equally between all agents using it.
Two settings of cost-sharing connection games have been previously studied: (i) games where {\em coalitions} can form, and (ii) games where edges are associated with {\em capacities}; both settings are inspired by real-life scenarios.
In this work we combine these scenarios and analyze strong equilibria (profiles where no coalition can deviate) in capacitated games.
This combination gives rise to new phenomena that do not occur in the previous settings.
Our contribution is two-fold.
First, we provide a topological characterization of networks that always admit a strong equilibrium.
Second, we establish tight bounds on the efficiency loss that may be incurred due to strategic behavior, as quantified by the strong price of anarchy (and stability) measures.
Interestingly, our results are qualitatively different than those obtained in the analysis of each scenario alone, and the combination of coalitions and capacities entails the introduction of more refined topology classes than previously studied.
\end{abstract}

\noindent \textbf{Keywords:} network congestion games, cost-sharing games, network design games, strong price of anarchy, strong equilibrium, coalitions, capacities, network topology

\section{Introduction}
The construction of networks by autonomous agents can be significantly affected by strategic behavior.
These situations are frequently modeled as {\em cost-sharing} connection games, which have been extensively studied \cite{Albers,FairCostSharing,ConnectionGame,CostSharingSE,capacitated}.
For example, consider the construction of a large computer network used by different countries, where each country is an autonomous system that serves its own strategic interests.

In cost-sharing games, a network is given and each edge is associated with a cost.
Each one of $n$ agents wishes to construct a path between its source and sink nodes, where the cost of each edge is shared equally between the agents who use it; each agent desires to minimize his individual cost.
Returning to our motivating example, a large computer network may be used by different countries, which should jointly cover its cost. All countries wish to use the network links, but prefer to do so at minimal cost.

The analysis of these games evolved around the Nash equilibrium (NE) notion --- a profile of strategies in which no agent can benefit from a unilateral deviation.
In recent years, two interesting cases of these games have been considered, both inspired by real-life scenarios.
First, a group of agents may form a {\em coalition} and collaborate for the benefit of all the members in the coalition.
This scenario is formalized using the notion of a {\em strong equilibrium} (SE) \cite{Aumann}.
A SE is a strategy profile from which no coalition can deviate in a way that benefits each one of its members.
Note that we are considering non-cooperative games with non-transferable utilities.
SE in cost-sharing connection games have been studied in \cite{CostSharingSE}.
The second scenario considers capacity constraints on the network edges \cite{capacitated}.
Here, each edge is associated with some capacity that limits the number of agents who can use it.
Both scenarios can naturally occur in our motivating example.
Indeed, countries may collaborate to improve their standing, and capacity constraints may arise due to bandwidth limits.

While each of these settings has been previously studied alone, the combination of the two has not been previously considered. This is the focus of the present paper.
Our main focus is on symmetric games, where all agents share the same source and sink nodes.
Interestingly, the combination of coalitions and capacities gives rise to new phenomena that do not occur in any of the previous scenarios.

\paragraph{Equilibrium existence.}
In the setting with no capacities or coalitions, the profile in which all agents use the lowest-cost path from the source to the sink is clearly a NE.
Now consider games with coalitions or capacities.
The profile just mentioned is clearly a SE as well, as no coalition can benefit by deviating from the lowest-cost path.
As for capacities, while capacity constraints may limit the use of the lowest-cost path, it has been proven that every (feasible) capacitated game admits a NE, due to the existence of an exact potential function \cite{capacitated}.
Consider next the combination of coalitions and capacities.
Here, even the mere existence problem becomes non-trivial.
Consider, for example, the game depicted in Figure \ref{fig:ext-para-ex}, played by two agents.
In our figures we use the notation $(x,y)$ to denote an edge with cost $x$ and capacity $y$.
In this game, the unique NE (up to renaming of agents) is one where one agent uses the upper edge ($a$) at cost $1$, and the other agent uses the path $b,c$ at cost $1.3$.
However, this profile is not a SE, as the two agents can deviate to the paths $b,c$ and $b,d$, respectively, at the respective costs of $0.7$ and $1.1$.
Since every SE is also a NE, and the only NE is not a SE, we conclude that this game does not admit a SE.

\begin{figure}
\centering
\includegraphics{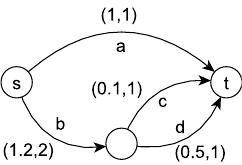}
\caption{\label{fig:ext-para-ex}A game with no SE}
\end{figure}

\paragraph{Efficiency loss.}
As another example, consider the efficiency loss incurred due to strategic behavior.
The standard measure used to quantify the efficiency loss is the price of anarchy (PoA), defined as the ratio between the cost of the worst-case NE and the optimal cost (where the social cost is defined as the overall cost of the edges in use).
In the game with no capacities or coalitions, it has been shown that the PoA can be as bad as $n$ (i.e., the number of agents), and this is tight \cite{ConnectionGame}.
The consideration of coalitions significantly reduces the loss; in fact, every SE is optimal.
In capacitated games, the PoA depends on the network topology.
For example, if the network topology adheres to a {\em series-parallel} structure (defined in Section \ref{sec:graph-theoretic}), then the PoA is bounded by $n$.
Do coalitions reduce the efficiency loss in capacitated games as in their uncapacitated counterparts?
Interestingly, the answer is {\em no} for series-parallel networks, but the answer is affirmative for a smaller set of topologies.
Once again, the combination of coalitions and capacities introduces interesting phenomena, and requires a more refined classification of network topologies than each setting alone.

\subsection{Our Results}

\paragraph{Equilibrium existence.}
As mentioned above, not all symmetric capacitated cost-sharing games admit a SE.
We provide a full characterization of network topologies that do admit a SE; i.e., every game played on a topology in this class admits a SE, and for every topology not in this class, there exists a game that does not possess a SE.
The analysis of this part requires the introduction of a new class of networks, which we refer to as {\em Series of Parallel Paths} (SPP) networks. This class is defined as all networks that are the concatenation of parallel-path networks.

\begin{table}
\centering
\begin{tabular}{|c|c|c|c|c|}
\cline{3-5}
\multicolumn{2}{c|}{} & \textbf{SPP and EP} & \textbf{SP} & \textbf{General} \\
\cline{1-5}
\hline
\multirow{2}{*}{\textbf{PoA}} & \textbf{Uncap.} & $n$ & $n$ & $n$ \\
\cline{2-5}
& \textbf{Cap.} & $n$ & $n$ & unbounded \\
\hline
\multirow{2}{*}{\textbf{SPoA}} & \textbf{Uncap.} & $1$ & $1$ & $1$ \\
\cline{2-5}
& \textbf{Cap. (*)} & $H_n$ & $n$ & unbounded \\
\hline
\textbf{SPoS} & \textbf{Cap. (*)} & $H_n$ & $\Theta(n)$ & unbounded \\
\hline
\end{tabular}
\caption{A comparison between upper bounds for the PoA in different scenarios. Cap. and uncap. are shorthands for capacitated and uncapacitated. Our results are marked with an asterisk (*). All bounds are tight.\label{table-summary}}
\end{table}

\paragraph{Efficiency loss.}
Previous analysis of the efficiency loss in cost-sharing games showed that the network topology significantly affects the incurred loss.
This observation is reinforced in this work, as summarized below (see also Table~\ref{table-summary} for a subset of our results).
We provide tight bounds on the strong price of anarchy (SPoA), defined as the ratio between the cost of the worst-case SE and the optimal cost and on the strong price of stability (SPoS), defined analogously with respect to the best-case SE.

Epstein et al.\ \cite{CostSharingSE} establish an upper bound of $H_n$ (i.e., the $n^{th}$ harmonic number, which is roughly $\log(n)$) on the SPoA in every game (including asymmetric games) that admits a SE.
As mentioned above, this result does not carry over to capacitated networks.
However, we show that this result does carry over to a specific class of network topologies, namely extension-parallel (EP) and SPP networks. Moreover, we provide an example showing that this bound is tight.
In series-parallel (SP) networks, the SPoA is at most $n$ (follows from the upper bound on the PoA), and we provide an example showing that this is tight.
For general networks, we show that the SPoA can be arbitrarily high, even in instances with only two agents.
Interestingly, our analysis results in a more refined classification of topologies than was required for the study of capacities or coalitions alone, most notably in the distinction between subclasses of SP networks.

In addition, we provide bounds on the SPoS.
We show that the SPoS can also be as high as $\Omega(n)$ in SP networks and is unbounded in general networks.
Interestingly, while the PoS is significantly better than the PoA ($1$ versus $n$ in uncapacitated games, and an even wider gap in capacitated games), in capacitated games we show that for all the topology classes we consider, the bounds on the SPoS and SPoA are asymptotically the same.
A natural interpretation of the PoS measure is the loss that is incurred if there exists a coordinator who can suggest an initial configuration to the agents.
While a coordinator can sometimes reduce the efficiency loss, our results here imply that a coordinator may not be useful in the worst case.

In addition to studying games with arbitrary capacity constraints, we consider the special case in which all the edges have the same capacity. We show that limiting the capacities to be homogeneous reduces the efficiency loss in games played on EP networks, in which every SE becomes a socially optimal solution.

\paragraph{Extensions.}
We consider two natural extensions to our model.
First, we consider asymmetric games. Here, we provide a characterization of network topologies that always admit a SE, and provide an example of a simple EP network for which the SPoA is unbounded.
Second, we show that all of our results regarding symmetric games extend to undirected networks as well.

\subsection{Related Work}
Since their introduction by Anshelevich et al.\ \cite{ConnectionGame}, cost-sharing connection games have been widely studied in various settings. In the original setting that was studied in \cite{ConnectionGame}, the agents' strategies consist of the amounts each of one of them is willing to pay for each edge, and in order to use an edge, the total amount that is paid for it should exceed its cost. This allows for general cost-sharing mechanisms that are not necessarily fair. The special case of fair cost-sharing was studied by Anshelevich et al.\ in \cite{FairCostSharing}. Under the fair cost-sharing rule, the strategy of an agent simply becomes a path in the graph, and the cost of each edge is shared equally between the agents who use it. They observed that the fair connection game belongs to the more general class of congestion games, which was introduced by Rosenthal \cite{CongestionGamesRosenthal}. In these games, a pure NE always exists, and they are known to admit an exact potential function, as defined in \cite{PotentialGames}.

The above works studied the inefficiency of equilibria by considering the measures of price of anarchy (PoA) and price of stability (PoS). The PoA measure was introduced by Koutsoupias and Papadimitriou \cite{PriceOfAnarchy} and has since been studied in numerous games and settings. In \cite{ConnectionGame}, it was shown that the PoA in every cost-sharing connection game is at most $n$ (the number of agents). It was also shown that the PoA can be as high as $n$ even in a simple network that consists of two parallel edges. For fair cost-sharing games in directed networks, a tight upper bound of $H_n$ on the PoS was established in \cite{FairCostSharing} using the potential function of the game. In the case of undirected networks, an upper bound of $O(\frac{\log{n}}{\log{\log{n}}})$ on the PoS in single-source games was established in \cite{LiMulticast}. In the special case of broadcast games, in which there is a single source and each other node is the sink node of a different agent, the PoS was shown to be a constant \cite{PoSUndirectedFOCS}. Lower bounds on the PoS in the undirected case were studied in \cite{PoSUndirectedTCS}, where it was shown that the PoS can be as high as $348/155$, $1.862$, and $20/11$ in general games, single-source games, and broadcast games, respectively.

The concept of a strong equilibrium --- a strategy profile that is resilient to coalitional deviations --- was introduced by Aumann \cite{Aumann}. Epstein et al.\ \cite{CostSharingSE} studied SE in fair and general cost-sharing connection games. Their results showed that the network topology plays a major role in the existence of a SE. For example, they showed that every single-source game that is played on a series-parallel graph has a SE, but this property does not hold for single-source games played on general graphs. The existence of SE in the more general class of network congestion games was studied in \cite{CongestionStrongPotential,NonDecCongestionSE}, and also in \cite{CongestionSE}, which established a characterization of the network topologies that admit SE in congestion games with monotone cost functions.

For fair cost-sharing, Epstein et al.\ \cite{CostSharingSE} and Albers \cite{Albers} independently proved an upper bound of $H_n$ on the strong price of anarchy (SPoA), which is the analogue of the PoA with respect to SE. This upper bound holds for every game that has a SE, regardless of the network topology. The SPoA measure was introduced in \cite{StrongPriceOfAnarchy}, and was also studied with respect to various additional settings; see, e.g., \cite{StrongPareto,LoadBalancing}.

The consideration of edge capacities in cost-sharing games was first suggested by Feldman and Ron \cite{capacitated}. They studied the PoA and PoS in symmetric games with regard to two objective functions: the sum of costs of all agents and the maximum cost of an agent. Their work has further emphasized the importance of the network topology. For example, when considering the sum of costs objective function, the PoA is at most $n$ in games played on series-parallel networks, but can be arbitrarily high in non-SP networks. A recent paper by Erlebach and Radoja \cite{furtherCapacitated} showed that the PoS with respect to the max-cost objective function is at most $n$ (in symmetric games), closing a gap that was presented in \cite{capacitated}.

\section{Model and Preliminaries}
\subsection{Symmetric Capacitated Fair Cost-Sharing Connection Games}
A symmetric capacitated fair cost-sharing connection (CFCSC) game is given by a tuple
\[
\Delta = (n,G=(V,E),s,t,\{p_e\}_{e \in E},\{c_e\}_{e \in E})
\]
where $n$ is the number of agents, $G=(V,E)$ is a directed graph, $s,t \in V$ are the source and sink nodes (respectively), and each edge $e \in E$ is associated with a cost $p_e \in \mathbb{R}^{\geq 0}$ and a capacity constraint $c_e \in \mathbb{N} \cup \{0\}$. Each agent wishes to construct an $s$-$t$ path in $G$ while maintaining minimal cost. The strategy space of agent $j$, denoted by $\Sigma_j$, is the set of all $s$-$t$ paths in $G$. The joint strategy space is denoted by $\Sigma=\Sigma_1 \times \ldots \times \Sigma_n$. The game is non-cooperative with non-transferable utilities.

Given a strategy profile $\bm{s}=(s_1,\ldots,s_n) \in \Sigma$, the number of agents that use an edge $e$ in the profile $\bm{s}$ is denoted by $x_{e}(\bm{s})=\left|\{j|e\in s_j\}\right|$. A profile $\bm{s}$ is said to be feasible if for every $e\in E$, $x_{e}(\bm{s})\leq c_{e}$. A game is said to be feasible if it admits a feasible strategy profile. Throughout this paper we only consider feasible games.
When all the edges have the same capacity, we say that the game has homogeneous capacities. Formally, in a game with homogeneous capacities, there is a number $c \in \mathbb{N}$ such that for every $e \in E$, $c_e = c$.

For a given strategy profile $\bm{s}$ and a coalition $C$, the induced strategy profile on the agents of the coalition $C$ is denoted by $\bm{s}_C$, and the strategy profile of the rest of the agents is denoted by $\bm{s}_{-C}$.
We consider the fair cost-sharing mechanism, where the cost of each edge is shared equally between all the agents who use it. The cost of agent $j$ in a strategy profile $\bm{s}$ is
\[
p_{j}(\bm{s})=\begin{cases}
\sum_{e\in s_{j}}\frac{p_{e}}{x_{e}(\bm{s})} & \text{if }\bm{s}\text{ is feasible}\\
\infty & \text{otherwise}
\end{cases}
\]

We use the utilitarian objective function, that is, the social cost of a strategy profile $\bm{s}$ is the sum of costs of all agents, $cost(\bm{s})=\sum_{j}{p_j(\bm{s})}$. The social cost of a profile is also equal to the sum of costs of the edges in use.

A strategy profile $\bm{s}$ is a Nash equilibrium (NE) if no agent can improve her cost by deviating to another strategy, i.e., for every $j$ and every strategy $s'_{j} \in \Sigma_{j}$, it holds that $p_{j}(\bm{s}) \leq p_{j}(s'_{j},\bm{s}_{-j})$ (where $\bm{s}_{-j}$ denotes the strategy profile of all agents except $j$ in $\bm{s}$). A strong equilibrium (SE) is a strategy profile in which no coalition can deviate jointly in a way that will strictly decrease the cost of every coalition member. Formally, a profile $\bm{s}$ is a SE if for every coalition of agents $C$ and every set of strategies $\bm{s}'_C \in \Sigma_{C}$, there exists an agent $j \in C$ such that $p_j(\bm{s}) \leq p_j(\bm{s}'_{C},\bm{s}_{-C})$.
The sets of NE and SE of a game $\Delta$ are denoted by $NE(\Delta)$ and $SE(\Delta)$, respectively.

We use the price of anarchy (PoA) and price of stability (PoS) measures to quantify the efficiency loss incurred due to strategic behavior. Let $\bm{s}^*$ be a strategy profile with minimal social cost in a game $\Delta$. Then, the PoA of $\Delta$ is the ratio between the cost of the worst-case NE and the cost of $\bm{s}^*$, namely $PoA=\max_{\bm{s} \in NE(\Delta)}{\frac{cost(\bm{s})}{cost(\bm{s}^*)}}$. Similarly, the PoS is the ratio between the cost of the best-case NE and the cost of $\bm{s}^*$, namely $PoS=\min_{\bm{s} \in NE(\Delta)}{\frac{cost(\bm{s})}{cost(\bm{s}^*)}}$. The analogues of the PoA and PoS with respect to SE are named the strong price of anarchy (SPoA) and strong price of stability (SPoS). Formally, $SPoA=\max_{\bm{s} \in SE(\Delta)}{\frac{cost(\bm{s})}{cost(\bm{s}^*)}}$ and $SPoS=\min_{\bm{s} \in SE(\Delta)}{\frac{cost(\bm{s})}{cost(\bm{s}^*)}}$.
For a family of games, these measures are defined with respect to the worst case over all the games in the family.

\subsection{Graph Theoretic Preliminaries\label{sec:graph-theoretic}}
A symmetric network is a graph $G=(V,E)$ with two designated nodes, a source $s\in V$ and a sink $t\in V$. Every node or edge in the network appears in at least one simple path from $s$ to $t$. The networks we consider in this paper are directed unless stated otherwise.
We hereby present three important operations on symmetric graphs.
\begin{itemize}
\item Identification: Given a graph $G=(V,E)$, the identification of two nodes $v_1,v_2 \in V$ yields a new graph $G'=(V',E')$, where $V'=(V \cup \{v\}) \backslash \{v_1,v_2\}$ and $E'$ includes all the edges of $E$, where each edge that was connected to $v_{1}$ or $v_{2}$ is now connected to $v$ instead. Figuratively, the identification operation is the collapse of two nodes into one.
\item Series composition: Given two symmetric networks, $G_1=(V_1,E_1)$ with $s_1,t_1 \in V_1$ and $G_2=(V_2,E_2)$ with $s_2,t_2 \in V_2$, the series composition $G=G_1 \rightarrow G_2$ is the network formed by identifying $t_1$ and $s_2$ in the union network $G'=(V_1 \cup V_2,E_1 \cup E_2)$. In the composed network $G$, the new source is $s_1$ and the new sink is $t_2$.
\item Parallel composition: Given two symmetric networks, $G_1=(V_1,E_1)$ with $s_1,t_1 \in V_1$ and $G_2=(V_2,E_2)$ with $s_2,t_2 \in V_2$, the parallel composition $G=G_1 \parallel G_2$ is the network formed by identifying the nodes $s_1$ and $s_2$ (forming a new source $s$) and the nodes $t_1$ and $t_2$ (forming a new sink $t$) in the union network $G'=(V_1 \cup V_2,E_1 \cup E_2)$.
\end{itemize}

Using these operations, we define three classes of network topologies that will be of interest throughout the paper.

\begin{definition}
A symmetric network $G=(V,E)$ is a {\em series-parallel} (SP) network if it consists of a single edge, or if there are two SP networks $G_1,G_2$ such that $G = G_1 \rightarrow G_2$ or $G = G_1 \parallel G_2$.
\end{definition}

\begin{definition}
A symmetric network $G=(V,E)$ is an {\em extension-parallel} (EP) network if one of the following applies:
\begin{enumerate}
\item $G$ consists of a single edge.
\item There are two EP networks $G_1,G_2$ such that $G = G_1 \parallel G_2$.
\item There is an EP network $G_1$ and an edge $e$ such that $G = G_1 \rightarrow e$ or $G = e \rightarrow G_1$.
\end{enumerate}
\end{definition}

\begin{definition}
A symmetric network $G$ is a {\em Series of Parallel Paths} (SPP) if there exist networks $G_1, \ldots, G_k$, each constructed by a parallel composition of simple paths, such that $G=G_1 \rightarrow \ldots \rightarrow G_k$.
\end{definition}

Note that every EP network is an SP network. An SPP network may or may not be an EP network, but it is always an SP network. Examples of SPP, EP, SP, and non-SP networks appear in Figures \ref{fig:SPP_No_SE}, \ref{fig:ext-para-ex}, \ref{fig:SP-SPoAn}, and \ref{fig:braess-ex}, respectively.

Finally, we have to define when a network is embedded in another network. A symmetric network $G$ is embedded in a network $G'$ if $G'$ is isomorphic to $G$ or to a network derived from $G$ using any number of the following operations:
\begin{itemize}
\item Subdivision: replacing an edge $(u,v)$ by a new node $w$ and two edges $(u,w)$ and $(w,v)$
\item Addition: adding a new edge connecting two existing nodes (including nodes that were added using subdivision or extension)
\item Extension: adding a new source or sink node and an edge connecting the new node with the original source or sink node, respectively
\end{itemize}

\section{Existence of Strong Equilibria\label{sec:existence}}
Every symmetric cost-sharing game admits a SE, as all agents can share the lowest-cost $s$-$t$ path \cite{CostSharingSE}.
In the capacitated version, it has been shown by Feldman and Ron \cite{capacitated} that a pure NE exists in every feasible game, by establishing that the game admits a potential function.
Therefore, the consideration of capacities or coalitions alone does not preclude the existence of an equilibrium.
However, as was already observed in the introduction (recall Figure \ref{fig:ext-para-ex}), capacitated games may not admit any SE, even when played on a simple EP network (as shown in the example).
We present an additional example of a game that does not admit a SE, which is played on an underlying network known as a Braess graph.

\begin{figure}
\centering
\includegraphics{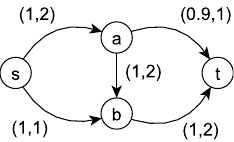}
\caption{\label{fig:braess-ex}A game with no SE played on a Braess graph}
\end{figure}

\begin{example}
Consider the game of two agents that is depicted in Figure \ref{fig:braess-ex}. We show that each of the possible strategy profiles is not a SE.
If both agents use the path $(s,a,b,t)$, one of the agents can profit by deviating to $(s,a,t)$.
If one agents uses $(s,a,t)$ and the other uses $(s,a,b,t)$, it is profitable for the latter to deviate to $(s,b,t)$.
If one agent uses $(s,a,b,t)$ and the other uses $(s,b,t)$, the former can reduce her cost by using $(s,a,t)$ instead.
A strategy profile in which one agent uses $(s,a,t)$ and the other uses $(s,b,t)$ is a pure NE, but it is not a SE because both agents can reduce their costs if they both use the path $(s,a,b,t)$. We conclude that this game does not admit a SE.
\end{example}

As it turns out, the two networks depicted in Figures \ref{fig:ext-para-ex} and \ref{fig:braess-ex} are, roughly speaking, the only barriers to SE existence. This is formalized in the remainder of this section as an exact characterization of the network topologies that always admit a SE.

\begin{definition}
A symmetric network $G$ is said to admit a SE if {\em every} symmetric CFCSC game played on $G$ admits a SE.
\end{definition}

Recall that a symmetric network is defined as a graph with designated source and sink nodes.
According to the last definition, a network $G$ is said to admit a SE if {\em every} CFCSC game played on $G$ (i.e., for every assignment of costs and capacities to the edges, and for every number of agents) admits a SE.
Conversely, a network $G$ does not admit a SE if there exists an example of a game played on $G$ that does not admit a SE.
The following theorem establishes the characterization of networks that admit a SE.

\begin{theorem}\label{existence-iff-thm}
A symmetric network $G$ admits a SE if and only if $G$ is an SPP network.
\end{theorem}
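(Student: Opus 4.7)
The plan is to prove the two directions of the biconditional separately.

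For the ``if'' direction (SPP implies every game admits a SE), suppose $G = G_1 \rightarrow G_2 \rightarrow \cdots \rightarrow G_k$ is SPP with each $G_i$ a parallel composition of simple paths. The key initial observation is that consecutive blocks meet only at a cut vertex, so every $s$-$t$ path in $G$ is obtained by choosing, independently, one simple sub-path inside each $G_i$. Consequently, each agent's cost decomposes additively across blocks, and any coalitional deviation can be analyzed block-by-block: a deviation improves every coalition member if and only if the restriction of the deviation to every block weakly improves them, with strict improvement in at least some blocks. This reduces the existence question to the single-block case: show that every CCS game played on a single parallel composition of simple paths (with arbitrary costs and capacities) admits a SE. For the single-block case I would construct a SE by choosing, among all feasible profiles, one that is extremal with respect to a suitable ordering---a natural candidate is the feasible profile whose sorted vector of individual costs is lexicographically minimal---and then argue that any coalitional deviation producing a strict improvement for every coalition member would alter the agent counts on the parallel paths in a way that yields a strictly smaller extremal, contradicting the choice.

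For the ``only if'' direction (non-SPP implies some game has no SE), I would first establish a structural lemma: if $G$ is not SPP then $G$ embeds either the Braess graph of Figure \ref{fig:braess-ex} or the extension-parallel gadget of Figure \ref{fig:ext-para-ex}. This should follow from classical topological characterizations of the relevant graph families: the Duffin/Valdes--Tarjan--Lawler-style theorem implies that any non-SP network embeds the Braess graph, and a decomposition argument on SP-but-not-SPP networks shows that the SP-decomposition tree must contain a parallel subtree with at least one operand that is neither a single edge nor a simple path (that is, itself contains an internal series composition with a nontrivial parallel operand), and such a configuration embeds the Figure \ref{fig:ext-para-ex} gadget after contracting irrelevant substructures. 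Given an embedding of one of the two bad networks in $G$, I would lift the two-agent no-SE game of the corresponding figure to $G$ by placing the prescribed costs and capacities on the embedded edges and assigning auxiliary edges either negligible costs with ample capacity (edges agents may freely use without disturbing the gadget analysis) or prohibitive costs (edges that should never be used). The two-agent non-existence arguments already sketched for the two figures then transfer to yield a CCS game on $G$ with no SE.

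The main obstacle is the single-block case in the ``if'' direction. Unlike the uncapacitated setting, the socially optimal profile is generally not a SE, since a small coalition can exploit underused capacity on a slightly more expensive parallel path to split its load and strictly improve, as illustrated by the three-path example in which placing three agents on a cheap capacity-three path is socially optimal yet a two-agent coalition gains by moving to a parallel path. The proof must therefore carefully track, under a hypothetical improving coalitional deviation, how parallel paths are activated and deactivated and how the per-path shares of non-coalition agents shift as counts change; the chosen invariant (e.g., lex-minimality of the sorted cost vector) must be robust enough to pin down a coordinate in which the deviation strictly decreases the invariant. The structural lemma in the ``only if'' direction is combinatorially non-trivial but is largely a synthesis of known decomposition characterizations of the SP and EP classes and should proceed by induction on the SP-decomposition tree of $G$.
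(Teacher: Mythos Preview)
Your ``only if'' direction matches the paper's approach: characterize SPP networks via forbidden embeddings (the Braess graph and the extension-parallel gadget of Figures~\ref{fig:braess-ex} and~\ref{fig:ext-para-ex}), then lift the two-agent no-SE instances to any network embedding one of them. The paper neutralizes edges introduced by the embedding's ``addition'' operation by setting their capacity to $0$; your suggestion of prohibitive costs also works, though capacity $0$ gives a cleaner bijection between feasible profiles of the two games.

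The ``if'' direction has a genuine gap. Your reduction to the single-block case rests on the claim that a coalitional deviation improves every member if and only if its restriction to every block weakly improves every member, with strict improvement somewhere. That biconditional is false: costs are additive across blocks, so a coalition member can strictly lose in one block and more than compensate in another, yielding a globally beneficial deviation whose restriction to some block is not even weakly improving for that member. Consequently, knowing that each block separately admits a SE does not let you concatenate block-SEs, under an arbitrary matching of agents to paths across blocks, into a global SE. The paper addresses exactly this point: it builds the SE in each parallel-edge block by a greedy minimum-fractional-cost rule (Algorithm~\ref{compute-se-alg}) and then \emph{sorts} the agents consistently, so that the same agent $j$ receives the $j$-th cheapest path in \emph{every} block. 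With this coordination, the minimum-index member of any deviating coalition $C$ provably cannot lower her cost in any single block, and therefore cannot lower her total cost, which yields the contradiction. Your lex-min idea within a single block is reasonable and close in spirit to the greedy construction, but without the cross-block alignment step the argument does not close.
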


The proof is divided into two parts. In Theorem \ref{thm:spp-se-existence} we prove that every SPP network admits a SE, and in Theorem \ref{thm:no-se-on-non-spp} we establish the converse direction.

\begin{theorem}\label{thm:spp-se-existence}
Every SPP network admits a SE.
\end{theorem}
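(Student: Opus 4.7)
The plan is to construct an SE explicitly, exploiting the additive decomposition of costs along the SPP structure $G = G_1 \to \cdots \to G_k$: since edges of distinct $G_i$'s are disjoint, each agent's total cost is the sum of their per-section costs, and every $s$-$t$ strategy decomposes uniquely into a choice of simple path in each $G_i$. I will proceed by induction on $k$.

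First I handle the base case $k=1$, where $G$ is a single parallel composition of simple paths. A key observation is that every agent who selects a given simple path $P$ must traverse each of its edges, so the common per-agent cost on $P$ is $p_P / x_P$, where $p_P := \sum_{e \in P} p_e$ and $x_P$ is the number of agents on $P$, subject to $x_P \le c_P := \min_{e \in P} c_e$. Thus the subgame on $G_1$ is equivalent to a capacitated parallel-links game. For such a game I plan to build an SE by the following iterative procedure: with $m$ currently unassigned agents and the still-available paths, pick $P^* \in \arg\min_P p_P/\min(c_P, m)$, commit $\min(c_{P^*}, m)$ of the remaining agents to $P^*$, and recurse on the rest. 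This produces batches with nondecreasing per-agent costs $\alpha_1 \le \alpha_2 \le \cdots$. The SE property will follow from observing that a coalition member cannot achieve a strictly lower per-agent cost than the greedy's iteration-$t$ minimum at the step in which they were first placed, for otherwise the greedy would have selected the deviating member's new path instead of $P_t$.

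For the inductive step, I apply the single-section construction to each subgame $\Delta_i$ on $G_i$ to obtain an SE with cost-ordered batches $\alpha_1^{(i)} \le \cdots \le \alpha_n^{(i)}$, and combine them into a profile $s^*$ on $G$ via a \emph{consistent matching}: enumerate the agents $1, \ldots, n$ so that, in each section, agent $\ell$ occupies the $\ell$-th cheapest slot. Thus agent $\ell$'s total cost in $s^*$ is $\sum_i \alpha_\ell^{(i)}$, and the same agent occupies the earliest batch in every section.

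The verification that $s^*$ is an SE of the full game is the main obstacle. Suppose a coalition $C$ has a deviation $s'_C$ strictly improving every member, and let $j^* \in C$ minimize the $s^*$-cost within $C$; by the consistent matching, $j^*$ lies in the earliest batch of $\Delta_i$ among $C$ for every $i$. Any strict improvement for $j^*$ would force some section $i$ in which the deviation places $j^*$ on a path beating $\Delta_i$'s greedy iteration minimum at $j^*$'s batch, and this is impossible without another coalition member moving onto that path in $\Delta_i$ while themselves worsening there; tracking this cross-section balance carefully will contradict the strict-improvement assumption. The delicate part is precisely this cross-section bookkeeping: a coalition can trade a loss in one section against a gain in another, and the consistent matching is what guarantees that any such trade is dominated by $j^*$'s greedy guarantee in the section being beaten.
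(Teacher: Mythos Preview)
Your construction and overall strategy match the paper's exactly: reduce each parallel-path section to a parallel-links subgame, run the greedy fractional-cost rule (this is the paper's Algorithm~1), align the per-section orderings so that agent $\ell$ occupies the $\ell$-th cheapest slot in every section, and then argue via the minimum-index coalition member $j^*$.

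The one place you go astray is the final verification. You anticipate needing ``cross-section bookkeeping'' because a coalition might trade a loss in one section against a gain in another, and you claim that $j^*$ could beat the greedy in section $i$ only if some other coalition member worsens there. Neither is true: $j^*$ simply cannot strictly improve in \emph{any} single section, irrespective of what the rest of $C$ does, so no bookkeeping is needed. Since $j^* = \min C$, every agent occupying $j^*$'s new path $P'$ in section $i$ after the deviation has index at least $j^*$ (agents with smaller index are not in $C$ and stay on their earlier-round paths, which were filled to capacity before $j^*$'s round; any such earlier path is therefore still full and unavailable to $j^*$). Hence $x_{P'}(s') \le \min\{c_{P'},\, n - j^* + 1\}$, which is at most the denominator the greedy considered for $P'$ at the round in which $j^*$ was placed. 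Since the greedy preferred $j^*$'s current path to $P'$ at that round, $p_{P'}/x_{P'}(s') \ge \alpha_{j^*}^{(i)}$. Summing over $i$ yields $p_{j^*}(s'_C, s_{-C}) \ge p_{j^*}(s^*)$, contradicting strict improvement for $j^*$. The consistent matching already does all the work by making $j^*$ the earliest-batch agent in every section simultaneously; there is nothing delicate left.
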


\begin{algorithm}[tb]
\caption{Compute a SE for a network of parallel edges\label{compute-se-alg}}
\begin{enumerate}
\item $j \leftarrow 1$
\item While there are agents that have not been assigned edges:
\begin{enumerate}
\item For every edge $e$ that has not been assigned to agents yet, compute its {\em fractional cost} $\frac {p_e} {\min\{c_e,n-\sum_{i=1}^{j-1}{n_i}\}}$.
\item Find the edge $e$ with the minimal fractional cost, and compute $n_j=\min\{c_e,n-\sum_{i=1}^{j-1}{n_i}\}$.
\item Assign $e$ to the following $n_j$ agents: $\sum_{i=1}^{j-1}{n_i}+1,\ldots,\sum_{i=1}^{j}{n_i}$.
\item Increment $j$.
\end{enumerate}
\end{enumerate}
\end{algorithm}

\begin{proof}
We first observe that in our context, networks of parallel paths can be reduced to networks of parallel edges, where each path is replaced by an edge with capacity that equals the minimal capacity on the path and cost that equals the total cost of the path.
Therefore, it suffices to prove the assertion of the theorem for networks constructed by series composition of parallel-edge graphs.

Our main building block in this proof is Algorithm \ref{compute-se-alg}, which computes a SE for games played on parallel-edge networks.
The algorithm assigns edges to agents in a greedy manner. It first finds the edge with the lowest \emph{fractional cost} (as defined in the description of the algorithm). When used to its maximum capacity (or by all the agents), this edge has the lowest possible cost for a single agent. The algorithm then assigns this edge to as many agents as possible. In the description of the algorithm, the number of agents who use the first edge is denoted by $n_1$.
In each iteration, the algorithm proceeds in the same way: it finds the edge with the lowest fractional cost that is still available and assigns it to as many agents as possible (this number is limited by the capacity of the edge and the number of remaining agents). The number of agents who use the $j^{th}$ edge is denoted by $n_j$. The algorithm terminates within at most $n$ steps.

As part of the proof, we show that the returned profile is a SE. An important property of this algorithm is the \emph{entireness property}: Each edge that is used by the algorithm is used to its maximum capacity, except for the last edge that is chosen by the algorithm, which may or may not be entirely used depending on the number of remaining agents. Moreover, the cost incurred due to an edge is exactly its fractional cost in the iteration in which it was chosen, and the costs of the agents are non-decreasing in the agents' indices.

After defining Algorithm \ref{compute-se-alg}, we are ready to prove the theorem.
Let $G_1, \ldots, G_k$ be parallel-edge networks, and let $G$ be a series composition of these networks. We define a strategy profile $\bm{s}$ for the game played on $G$ in the following way: For every $i=1,\ldots,k$, we compute a profile $\bm{s}^i$ for network $G_i$ using Algorithm \ref{compute-se-alg}. $\bm{s}$ is defined to be the strategy profile in which the strategy of agent $j$ is the concatenation of his strategies in $\bm{s}^1,\ldots,\bm{s}^k$. We prove that $\bm{s}$ is a SE.

From the definition of Algorithm \ref{compute-se-alg}, we get that for every profile $\bm{s}^i$, and for every two agents $j,j'$ such that $j<j'$, $p_j(\bm{s}^i) \leq p_{j'}(\bm{s}^i)$. In other words, the first agent in $\bm{s}$ incurs the lowest cost in each of $G_1,\ldots,G_k$, the next agent incurs the second lowest cost in each of the subnetworks, and so on.

The intuition behind the proof is the following: Since the first agent pays the lowest possible cost in each of the subnetworks, she cannot reduce her cost in any subnetwork and thus cannot participate in any profitable coalitional deviation. Assuming that the first agent does not participate in the deviation, the second agent cannot reduce her cost in any subnetwork and cannot participate in the deviation as well. Arguments of this nature will be used to show that none of the agents can participate in the coalitional deviation.

Formally, assume by contradiction that there is a coalition $C$ for which there is a profitable deviation, yielding a strategy profile $\bm{s}'$. Let $j$ be the minimal index of an agent in $C$, i.e., $j$ is the agent that uses the $j^{th}$ lowest-cost edge in each of $\bm{s}^1,\ldots,\bm{s}^k$.

In order for $\bm{s}'$ to be a profitable coalitional deviation, $j$ must reduce her cost in at least one of $G_1,\ldots,G_k$. Assume, without loss of generality, that after the coalition deviates to $\bm{s}'$, agent $j$ reduces her cost in $G_1$. Denote the edge used by $j$ in $\bm{s}^1$ by $e$. It follows that there exists an edge $e'$ in $G_1$ such that
\[
\frac{p_{e'}}{x_{e'}(\bm{s}')} < \frac{p_e}{x_e(\bm{s})}.
\]

First, we show that $e' \ne e$. From the last inequality, if $e' = e$, the number of agents who use $e$ in $\bm{s}'$ must be larger than the number of agents who use $e$ in $\bm{s}$. We consider each of the edges that are used in $\bm{s}^1$ and show that the number of agents who use it cannot increase after the deviation. If $e$ is used to its maximum capacity, the number of agents who use it cannot increase. By the entireness property of Algorithm \ref{compute-se-alg}, we know that the only edge in $\bm{s}^1$ that may not be entirely used is the last edge that was picked by the algorithm. So if the edge $e$ that agent $j$ uses in $\bm{s}^1$ is the last edge chosen by the algorithm, all the agents with higher indices must also use $e$ in $\bm{s}^1$. Particularly, since $j$ is the agent with the minimal index in $C$, it follows that all the agents in the coalition use $e$ in $\bm{s}^1$. Therefore, the number of agents who use $e$ cannot increase after the deviation in this case as well. We have shown that the number of agents who use $e$ in $\bm{s}'$ cannot be larger than the number of agents who use $e$ in $\bm{s}$, and conclude that $e' \ne e$.

Second, we show that $e'$ cannot be one of the edges that are entirely used in $\bm{s}^1$. Since $e' \ne e$, $e'$ was picked by the algorithm either before or after the iteration in which $j$ was assigned an edge. Algorithm \ref{compute-se-alg} picks edges in non-decreasing order of fractional cost, so agent $j$ can only reduce her cost if $e'$ was picked by the algorithm before $j$ was assigned an edge. In that case, $e'$ is only used by agents with indices lower than $j$. However, these agents do not participate in the deviation ($j$ is the agent with the minimal index in $C$), and $j$ cannot deviate to $e'$, which is used to its maximum capacity. Therefore, $e'$ cannot be one of the edges that are entirely used in $\bm{s}^1$.

We are left with two options for edge $e'$:
\begin{enumerate}
\item $e'$ is partially used in $\bm{s}^1$ (i.e., $e'$ is the last edge that was chosen by Algorithm \ref{compute-se-alg}). Since $e' \ne e$, the edge that agent $j$ uses in $\bm{s}^1$ was picked by the algorithm before $e'$. So $e'$ is used in $\bm{s}^1$ only by agents indexed $j+1$ or higher. The only agents who may use $e'$ in $\bm{s}'$ are agents who use $e'$ in $\bm{s}^1$ or agents that belong to $C$. These two sets of agents only contain agents with indices $j$ or higher, thus $x_{e'}(\bm{s}') \leq \left|\{j,\ldots,n\}\right| = n-j+1$.
\item $e'$ is not used in $\bm{s}^1$. In this case, the agents who use $e'$ in $\bm{s}'$ must be members of the coalition $C \subseteq \{j,\ldots,n\}$. Therefore, $x_{e'}(\bm{s}') \leq \left|\{j,\ldots,n\}\right| = n-j+1$.
\end{enumerate}

In both cases, we have shown that $x_{e'}(\bm{s}') \leq n-j+1$, and since the number of agents who use an edge is at most its capacity, $x_{e'}(\bm{s}') \leq \min\{c_{e'},n-j+1\}$. From the assumption that agent $j$ reduces her cost in $G_1$, we get that
\[
\frac{p_{e'}}{\min\{c_{e'},n-j+1\}} \leq \frac{p_{e'}}{x_{e'}(\bm{s}')} < \frac{p_e}{x_e(\bm{s})}
\]
and the algorithm should have assigned $e'$ to agent $j$ (or a previous agent) instead of $e$. We get a contradiction and conclude that $\bm{s}$ is a SE.
\end{proof}

We now prove the converse direction, namely that for every non-SPP network, there is a CFCSC game played on it that does not admit a SE.
The key lemma in our proof connects us back to the two examples mentioned in the beginning of this section.

\begin{lemma}\label{lem:spp-not-embedding}
A symmetric network is SPP if and only if it does not embed any of the networks depicted in Figure \ref{fig:non-spp}.
\end{lemma}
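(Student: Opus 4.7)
The plan is to prove the two directions of the characterization. Based on the motivating examples in Figures~\ref{fig:ext-para-ex} and~\ref{fig:braess-ex}, I expect Figure~\ref{fig:non-spp} to contain (i) a ``Y-inside-parallel'' obstruction---two parallel paths, one of which has an internal branching vertex---and (ii) a Braess-type obstruction which is not even SP. Throughout, I would exploit that the embedding relation is closed under subdivision, edge addition, and extension, so the forbidden patterns need appear only topologically and not literally.

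For the forward direction (SPP implies no embedding), I would use structural induction on a fixed SPP decomposition $G = G_1 \rightarrow \cdots \rightarrow G_k$. The key invariant is that every vertex internal to some parallel-paths block $G_i$ (not a boundary vertex $s_i$ or $t_i$) has in-degree and out-degree exactly one in all of $G$, so every $s$-$t$ path traverses the series breakpoints $t_1, \ldots, t_{k-1}$ in order. Under this invariant, any attempted topological embedding of a forbidden network $H$ would have to map an internal branching vertex of $H$ into an internal vertex of some $G_i$, contradicting the degree-one property. A brief case check on each forbidden pattern finishes this direction.

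For the converse direction (no embedding implies SPP), I would first use the absence of the Braess-type embedding to deduce that $G$ is SP (via the classical two-terminal SP characterization), so $G$ admits an SP decomposition tree. I would then refine this tree into an SPP decomposition by induction on its depth. At each parallel node $G' = G_1 \parallel G_2$, I would argue that neither $G_i$ can contain a non-trivial series composition of multi-edge sub-networks; otherwise $G_i$ would have an internal vertex $v \notin \{s_i, t_i\}$ at which two internally disjoint sub-paths split or merge, and combining this with any $s$-$t$ path through $G_{3-i}$ gives a topological embedding of the first forbidden network. Once each parallel node is restricted to combining simple paths, adjacent parallel nodes at the same level can be merged into a single parallel-paths block, and the alternating series layers yield the desired SPP form $G = H_1 \rightarrow \cdots \rightarrow H_k$.

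The main obstacle will be the converse direction, specifically the step that converts internal branching in one side of a parallel composition into an actual embedding of the first forbidden network. The flexibility provided by subdivision and extension should suffice, but the case analysis must cover all positions of the branching vertex inside the sub-decomposition of $G_1$, must cope with the reconvergence of the diverging paths before they reach $t_1$, and must confirm that the external path through $G_2$ stays vertex-disjoint from the branching region (automatic from the definition of parallel composition). A smaller subtlety is to justify the initial reduction to SP using precisely this paper's definition of embedding; if the standard characterization uses a different minor notion, a short self-contained translation will be required.
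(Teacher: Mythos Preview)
Your proposal follows essentially the same route as the paper: reduce to SP via the Braess-graph characterization (the paper cites Milchtaich for this), then induct on the SP decomposition tree, arguing that at a parallel node $G = G_1 \parallel G_2$ any non-path $G_i$ contains a series of an edge with a parallel-edge block, which together with a path through $G_{3-i}$ yields the forbidden embedding. One small mismatch: Figure~\ref{fig:non-spp} actually contains \emph{three} patterns---the Braess graph plus two mirror-image ``Y-inside-parallel'' networks (branching near $t$ versus near $s$)---so your case analysis should account for both orientations, but your argument already handles this once you allow the branching vertex to sit anywhere along the non-path side.
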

\begin{proof}
We first show that every SPP network does not embed any of the networks of Figure \ref{fig:non-spp}. Note that in a network of parallel paths, no node except for the source and the sink can have two edges entering or leaving it. Thus, a network of parallel paths cannot embed any of the networks of Figure \ref{fig:non-spp}. For an SPP network to embed one of the forbidden networks (the networks that appear in the figure), at least one of its subnetworks of parallel paths has to embed a forbidden network, which is impossible. We conclude that if a network is SPP, it cannot embed any of these networks.

In the converse direction, let $G$ be a network that does not embed any of the networks of Figure \ref{fig:non-spp}. We prove that $G$ is SPP. It is proven in \cite{CongestionSE} that a network is SP if and only if it does not embed a Braess graph (Figure \ref{fig:non-spp}(a)). Therefore, $G$ is SP. The rest of the proof is by induction on the height of the construction tree of $G$. The base case is a single edge, which is an SPP network. The induction hypothesis states that every SP network that does not embed any of the forbidden networks and has a construction tree with height smaller than that of $G$ is SPP.

Assume, by contradiction, that $G$ is not SPP. Since $G$ is SP, we know that either $G=G_1 \rightarrow G_2$ or $G=G_1 \parallel G_2$. By induction, we can assume that both $G_1$ and $G_2$ are SPP (otherwise, one of $G_1$ and $G_2$ is an SP network with smaller construction tree, that does not embed any of the forbidden networks and is not an SPP network). The series composition of two SPP networks is an SPP network, so it is impossible that $G=G_1 \rightarrow G_2$. Hence, $G=G_1 \parallel G_2$. If both of $G_1,G_2$ are networks of parallel paths, their parallel composition is a network of parallel paths, which is SPP. Therefore, one of $G_1,G_2$ embeds a series composition of a single edge and a parallel-edge network. It follows that $G$ embeds one of the networks presented in Figure \ref{fig:non-spp}(b)-(c). It is a contradiction, hence $G$ is SPP.
\end{proof}

\begin{figure}
\centering
\includegraphics{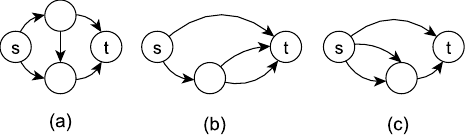}
\caption{\label{fig:non-spp}Minimal non-SPP networks}
\end{figure}

Using Lemma~\ref{lem:spp-not-embedding} we can now prove the following. The proof is similar to that of a lemma proven in \cite{capacitated}, and is specified here for completeness.

\begin{theorem}\label{thm:no-se-on-non-spp}
For every non-SPP network $G$, there exists a symmetric CFCSC game played on $G$ that does not admit a SE.
\end{theorem}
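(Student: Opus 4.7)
The plan is to reduce to the two minimal non-SPP networks of Figure~\ref{fig:non-spp}. By Lemma~\ref{lem:spp-not-embedding}, every non-SPP network $G$ embeds at least one of these two networks; let $H$ denote such an embedded copy. The introduction already exhibits two-agent CCS games on each of them (Figures~\ref{fig:ext-para-ex} and \ref{fig:braess-ex}) that admit no SE; call this bad game $\Delta_H$. The strategy is to ``lift'' $\Delta_H$ to a game $\Delta_G$ on $G$ whose feasible strategies are in bijection with those of $\Delta_H$, and then transfer the non-existence of SE from $\Delta_H$ to $\Delta_G$.

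I would specify costs and capacities for the edges of $G$ introduced through the three embedding operations as follows: (i) if an edge $e_H \in E(H)$ with cost $p_{e_H}$ and capacity $c_{e_H}$ is subdivided into a simple path in $G$, place the entire cost $p_{e_H}$ on one of the new edges, cost $0$ on the others, and capacity $c_{e_H}$ on each; (ii) for any edge added between two existing nodes (the ``addition'' operation), set capacity $0$, so it cannot be used in any feasible profile; (iii) for any extension edge connecting a new source or sink to the current endpoint, set cost $0$ and capacity $n$, forcing all agents to use it for free so that it plays no strategic role. The resulting $\Delta_G$ is feasible because $\Delta_H$ is, and its feasible strategies correspond bijectively, via the natural projection $\sigma$ that contracts each subdivided path back to its original edge and discards extension prefixes and suffixes, to the strategies of $\Delta_H$.

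It then remains to check that $\sigma$ preserves individual costs and coalitional deviations. Subdivisions only spread a single edge cost across sequential edges traversed by the same agents, while extension edges contribute zero to everyone, so every agent's cost in $\Delta_G$ equals her cost under $\sigma$ in $\Delta_H$. Combined with the observation that no feasible strategy in $\Delta_G$ may use a capacity-$0$ added edge, this implies that any profitable coalitional deviation in $\Delta_G$ projects to a profitable coalitional deviation in $\Delta_H$. Hence any SE of $\Delta_G$ would yield a SE of $\Delta_H$, contradicting the construction of $\Delta_H$. The main obstacle is verifying that the projection really is a bijection at the level of \emph{coalitional} deviations --- in particular, that no coalition in $\Delta_G$ can gain from rerouting an internal node of a subdivided path in a direction unavailable in $\Delta_H$; this is secured precisely by the capacity-$0$ choice in (ii) and the simple-path structure of subdivisions, which leave every interior node with no other usable edges.
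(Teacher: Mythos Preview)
Your proposal is correct and follows essentially the same approach as the paper: invoke Lemma~\ref{lem:spp-not-embedding}, start from the two-agent games of Figures~\ref{fig:ext-para-ex} and~\ref{fig:braess-ex}, and propagate the non-existence of SE through the embedding operations by assigning capacity~$0$ to added edges, cost~$0$ and capacity~$n$ to extension edges, and spreading the original cost over subdivided edges while keeping the capacity. The only cosmetic difference is that the paper phrases the correspondence as a forward bijection $f$ built one embedding step at a time, whereas you describe its inverse projection $\sigma$ globally; note also that Figure~\ref{fig:non-spp} actually contains three forbidden networks (the Braess graph plus two mirror-image EP networks), though the same game from Figure~\ref{fig:ext-para-ex} handles both of the latter.
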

\begin{proof}
By Lemma~\ref{lem:spp-not-embedding}, if $G$ is not an SPP network, it must embed one of the networks depicted in Figure \ref{fig:non-spp}. We define a game of two agents played on $G$. We start from one of the networks presented in Figure \ref{fig:non-spp} and assign costs and capacities as in Figures \ref{fig:ext-para-ex} and \ref{fig:braess-ex}. The network $G$ is derived from the forbidden network using the operations: subdivision, addition, and extension. We show how to assign costs and capacities to the edges that are added in the process such that each strategy profile in the new game corresponds to a strategy profile in the original game, which does not admit a SE.

Formally, let $\Delta$ be a CFCSC game played on a network $G$, and let $G'$ be a network derived from $G$ by applying one of the operations: subdivision, addition, or extension. We construct a game $\Delta'$ played on $G'$ that emulates $\Delta$: We define a bijection $f$ from the strategy space of an agent in $\Delta$ to the strategy space of an agent in $\Delta'$, such that $(s_1,\ldots,s_n)$ is feasible whenever $(f(s_1),\ldots,f(s_n))$ is feasible, and $p_j(s_1,\ldots,s_n) = p_j(f(s_1),\ldots,f(s_n))$ for every strategy profile $(s_1,\ldots,s_n)$. We do not need to define different mappings between strategy spaces for different agents, as we consider symmetric games and the strategy spaces of the agents are identical. We define $\Delta'$ and the function $f$ for each one of the possible operations:

\begin{itemize}
\item Subdivision: If an edge $e$ is subdivided into $e_1$ and $e_2$, we set $c_{e_1}=c_{e_2}=c_e$, $p_{e_1}=p_e$, and $p_{e_2}=0$. Given a strategy $s_j$ in $\Delta$, $f(s_j)$ is the same path as $s_j$ where the edge $e$ is replaced by the path $(e_1, e_2)$ (if $e \notin s_j$, $f(s_j) = s_j$). No strategy profile in $\Delta'$ can use only one of $e_1, e_2$, so $f$ is a bijection.
\item Addition: If an edge $e$ connecting two existing nodes is added, we set its capacity to be $c_e=0$, and define $f(s_j) = s_j$ for every strategy $s_j$. No strategy in $\Delta'$ can use the new edge, so $f$ is a bijection.
\item Extension: If the source or the sink are extended using a new edge $e$, we set $c_e=n$ and $p_e=0$. Given a strategy $s_j$, $f(s_j)$ uses the newly added edge $e$ concatenated to the path used in $s_j$. Every strategy in $\Delta'$ must use the new edge $e$, concatenated to a path in $\Delta$, so $f$ is a bijection.
\end{itemize}

We claim that $\Delta'$ does not admit a SE if and only if $\Delta$ does not admit a SE. Let $\bm{s},\bm{s}'$ be strategy profiles in $\Delta$, and let $C$ be a coalition. The profile $(s'_1,\ldots,s'_n)$ is a profitable deviation from $(s_1,\ldots,s_n)$ for $C$ in $\Delta$, if and only if the profile $(f(s'_1),\ldots,f(s'_n))$ is a profitable deviation from $(f(s_1),\ldots,f(s_n))$ for $C$ in $\Delta'$.
\end{proof}

This concludes the proof of Theorem \ref{existence-iff-thm}.

\section{Strong Price of Anarchy}
\subsection{EP and SPP Networks}
In this section we bound the strong price of anarchy (SPoA) in capacitated games that admit SE.
The following theorem establishes an upper bound on the SPoA for EP networks.
\begin{theorem}
\label{thm:ep-spoa}
For every symmetric CFCSC game played on an EP network, it holds that $SPoA \leq H_n$ (if a SE exists), and this bound is tight.
\end{theorem}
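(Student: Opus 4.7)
The plan is to follow the template of Epstein, Feldman, and Mansour for the uncapacitated case~\cite{CostSharingSE} and adapt it to handle capacities using the extension-parallel structure of $G$. Relabel the agents in the given SE $s$ so that $p_1(s)\ge p_2(s)\ge\cdots\ge p_n(s)$, and let $s^*$ be a socially optimal profile. The central inequality to establish is
\[
p_k(s)\;\le\;\frac{cost(s^*)}{k} \qquad \text{for every } k\in\{1,\ldots,n\},
\]
because summing over $k$ immediately yields $cost(s)=\sum_{k=1}^n p_k(s)\le H_n\cdot cost(s^*)$, which is the claimed bound.

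To prove $p_k(s)\le cost(s^*)/k$, for each $k$ I would exhibit a feasible profile $\hat s$ that differs from $s$ only on the coalition $C_k=\{1,\ldots,k\}$ and in which every agent of $C_k$ pays at most $cost(s^*)/k$. Since $s$ is a SE, some $i\in C_k$ must then satisfy $p_i(s)\le p_i(\hat s)\le cost(s^*)/k$, and by the decreasing sort $p_k(s)\le p_i(s)$, giving the inequality.

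The construction of $\hat s$ is the heart of the argument and is where the EP structure enters. I would recurse on the EP decomposition of $G$. If $G$ is a single edge, every agent plays that edge and there is nothing to show. If $G=G_1\to e$ or $G=e\to G_1$, every $s$-$t$ path uses $e$; stripping $e$ reduces the game to $G_1$, on which the induction hypothesis applies, and the fixed contribution of $e$ to per-agent cost is identical in $s$, $s^*$, and $\hat s$. If $G=G_1\parallel G_2$, the optimal $s^*$ splits the agents between the branches; I would split $C_k$ between $G_1$ and $G_2$ in the same proportions, inside each branch let the coalition members emulate the cheapest $s^*$-strategies of that branch, and rely on the fact that in an EP network two paths drawn from different branches of a parallel composition share no edges. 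Consequently $(\hat s_{C_k},s_{-C_k})$ never exceeds the maximum of the per-edge loads of $s$ and $s^*$ on any edge, so $\hat s$ is feasible, and an averaging argument within each branch bounds every coalition member's cost by $cost(s^*)/k$.

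The main obstacle I expect is precisely this last step: designing the split of $C_k$ across branches and the assignment of coalition members to paths so that capacities are respected on edges used simultaneously by coalition and non-coalition agents, while still achieving the per-agent bound. The EP assumption is essential, because two paths in an EP network can overlap only on a contiguous sequence of bridge (extension) edges, never on an arbitrary subgraph as in SP or general networks; this structural containment is what lets the costs be charged back, branch by branch, to $cost(s^*)/k$. Without EP the argument breaks down, which is consistent with the paper's tight $\Theta(n)$ SPoA bounds for SP recorded in Table~\ref{table-summary}.
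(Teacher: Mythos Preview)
Your high-level template matches the paper's: reduce to the per-agent inequality $p_k(s)\le cost(s^*)/k$ by deviating the coalition $C_k$ into an optimal profile and invoking the SE property. The genuine gap is in your feasibility argument for the deviation.

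The claim that ``$(\hat s_{C_k},s_{-C_k})$ never exceeds the maximum of the per-edge loads of $s$ and $s^*$ on any edge'' does not follow from splitting $C_k$ proportionally between the branches of a parallel composition. Disjointness of branches controls interaction \emph{across} branches, not the interaction of coalition and non-coalition agents \emph{within} a branch. Concretely, if $G=G_1\parallel G_2$ with $G_1$ a single edge of capacity~$2$, $n=4$, the non-coalition agents $3,4$ occupy $G_1$ in $s$, and $s^*$ also routes two agents through $G_1$, then your proportional rule sends one member of $C_2=\{1,2\}$ to $G_1$, giving load $3>2$ there---even though $x_e(s)=x_e(s^*)=2$. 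The paper's remedy (Lemma~\ref{feasible-combined-profile-ep}, via Algorithm~\ref{alg-find-optimal-profile}) is different in kind: it does not build a separate $\hat s$ for each $k$, but constructs a \emph{single} optimal profile $s^*$ by recursing on the EP structure of $G_{OPT}$ and assigning each agent in $s^*$, whenever possible, the same edges she already uses in $s$. The upshot is that for every edge $e$ the user sets $M=\{i:e\in s_i\}$ and $M^*=\{i:e\in s^*_i\}$ are \emph{nested} (one contained in the other), which immediately makes $(s^*_C,s_{-C})$ feasible for \emph{every} coalition $C$. This nesting property is the key idea your proposal is missing; the branch-disjointness of EP networks is used, but only inside the construction that produces nesting, not as a direct bound on combined loads.

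A secondary issue: asserting that \emph{every} member of $C_k$ can be made to pay at most $cost(s^*)/k$ is stronger than what an averaging argument yields (only \emph{some} member). The Epstein et al.\ proof handles this by iterating over $k=n,n-1,\ldots,1$ with relabeling, but that requires feasibility of $(s^*_{C_k},s_{-C_k})$ for the entire nested family of coalitions simultaneously---again exactly what the single-$s^*$ nesting lemma supplies, and what a per-$k$ construction would not.
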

\begin{proof}
We first prove the upper bound. Epstein et al.\ \cite{CostSharingSE} showed that $SPoA \leq H_n$ for uncapacitated cost-sharing games.
In their proof, they used the fact that no coalition can beneficially deviate from a strategy profile $\bm{s}$ that is a SE (by definition).
In particular, if some coalition $C$ deviates to its corresponding profile in the socially optimal profile $\bm{s}^*$, then one of the agents in $C$ weakly prefers the initial profile $\bm{s}$ to the new profile $(\bm{s}^*_{C},\bm{s}_{-C})$.
The desired bound is then derived by the obtained inequalities for coalitions of sizes $n, \ldots, 1$.
The only barrier to applying the exact same technique to capacitated games is the fact that the deviation into profile $(\bm{s}^*_{C},\bm{s}_{-C})$ might be infeasible due to capacity constraints.
Our key lemma in this section shows that for games played on EP networks there always exists such a feasible deviation. The lemma is sufficient in order to prove the upper bound on the SPoA. Note that there may be more than a single socially optimal profile, but we have the freedom to choose one of them to be $\bm{s}^*$.

\begin{lemma}
\label{feasible-combined-profile-ep}
Let $G$ be an EP network and $\bm{s}$ be a SE in a symmetric CFCSC game played on $G$. There exists a feasible strategy profile $\bm{s}^*$ such that the cost of $\bm{s}^*$ is minimal, and for every coalition $C$, the profile $(\bm{s}_C,\bm{s}^*_{-C})$ is feasible.
\end{lemma}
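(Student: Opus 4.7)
I would prove this by induction on the EP structure of $G$, after first reformulating the conclusion. Observe that $(s^*_C, s_{-C})$ is feasible for every coalition $C$ if and only if $|A_e \cup B_e| \le c_e$ for every edge $e$, where $A_e = \{i : e \in s^*_i\}$ and $B_e = \{i : e \in s_i\}$; the worst-case $C$ for edge $e$ collects exactly the agents in $A_e \setminus B_e$. Since the game is symmetric, the multiset of paths in $s^*$ fixes its cost, so we are free to pick any bijection between paths and agent labels without altering optimality.

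Since the parallel-composition step will naturally force a comparison between profiles with different numbers of agents, I would strengthen the statement before inducting: for any EP network $H$ and any two feasible profiles $\sigma, \tau$ on $H$ with $n_\sigma \le n_\tau$ agents, one can label them with agent sets $L_\sigma \subseteq L_\tau$ of sizes $n_\sigma, n_\tau$ such that $|A_e \cup B_e| \le c_e$ holds for every edge $e \in H$. The lemma itself is the case $\sigma = s$, $\tau = s^*$, $n_\sigma = n_\tau = n$, so $L_\sigma = L_\tau = [n]$.

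The base case is immediate. If $H = e \to H'$ (the case $H' \to e$ is symmetric), every $s$-$t$ path uses the bridge $e$, so $|A_e \cup B_e| = n_\tau \le c_e$ automatically; the restrictions to $H'$ remain feasible and induction on $H'$ completes the step. For the main case $H = H_1 \parallel H_2$, let $\sigma_j, \tau_j$ be the restrictions to $H_j$ with $n_{\sigma_j}, n_{\tau_j}$ agents, and WLOG swap the sides so that $n_{\sigma_1} \le n_{\tau_1}$. Set $L_{\tau_1} = \{1, \ldots, n_{\tau_1}\}$, $L_{\tau_2} = \{n_{\tau_1}+1, \ldots, n_\tau\}$, $L_{\sigma_1} = \{1, \ldots, n_{\sigma_1}\}$. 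Choose $L_{\sigma_2}$ according to the sign of $n_{\sigma_2} - n_{\tau_2}$: if $n_{\sigma_2} \le n_{\tau_2}$, take $L_{\sigma_2} \subseteq L_{\tau_2}$ as an initial segment; otherwise, set $L_{\sigma_2}$ to be $L_{\tau_2}$ together with $n_{\sigma_2} - n_{\tau_2}$ extra ``migrator'' labels placed inside $L_{\tau_1} \setminus L_{\sigma_1}$, which has enough room since $n_\sigma \le n_\tau$ gives $n_{\sigma_2} - n_{\tau_2} \le n_{\tau_1} - n_{\sigma_1}$. In both subcases the global containment $L_\sigma \subseteq L_\tau$ is maintained. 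Applying the inductive hypothesis on $H_1$ (with $\sigma_1$ smaller) and on $H_2$ (with either $\sigma_2$ or $\tau_2$ smaller) produces labelings satisfying edge-wise feasibility, and because every edge of $H$ belongs to exactly one of $H_1, H_2$, these aggregate to global feasibility on $H$.

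The main obstacle I anticipate is exactly the parallel subcase where $n_{\sigma_1} < n_{\tau_1}$ but $n_{\sigma_2} > n_{\tau_2}$: the containment directions on the two sides point opposite ways, and the ``migrator'' labels must be slotted into $L_{\tau_1} \setminus L_{\sigma_1}$ in order to preserve the global inclusion $L_\sigma \subseteq L_\tau$. The bookkeeping is delicate, but the arithmetic works because the assumption $n_\sigma \le n_\tau$ is precisely the inequality that makes the label budgets on the two sides compatible. Once this bookkeeping is set up correctly, the remaining pieces are routine, and the original statement follows as the symmetric special case with $n_\sigma = n_\tau$.
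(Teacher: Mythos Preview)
Your approach is correct and takes a genuinely different route from the paper's. The paper does not strengthen and induct abstractly; instead it fixes the edge set $G_{OPT}$ of some optimal profile and runs an explicit recursive procedure (Algorithm~\ref{alg-find-optimal-profile}) on the EP decomposition of $G_{OPT}$, greedily assigning each edge to the agents who already use it in $s$ and patching up any unassigned agents afterward via a separate path-completion lemma (Lemma~\ref{complete-partial-profile-lemma}). This yields the stronger per-edge nesting property that $M_e \subseteq M^*_e$ or $M^*_e \subseteq M_e$, from which $|A_e \cup B_e| \le c_e$ is immediate. Your route is more elementary and more general: it applies to any two feasible profiles (making transparent that neither the SE property of $s$ nor the optimality of $s^*$ is actually needed for the feasibility argument), and it avoids both the auxiliary lemma and the need to isolate $G_{OPT}$. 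The paper's approach, in exchange, gives an explicit construction of $s^*$ and the stronger nesting conclusion. One small point you gloss over: your induction relabels \emph{both} profiles, whereas in the lemma $s$ comes with a fixed labeling; you should remark that once $n_\sigma = n_\tau$ a single global permutation of $[n]$ restores $s$'s original labels while only permuting $s^*$, and this leaves every $|A_e \cup B_e|$ unchanged.
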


The following lemma, which is due to Feldman and Ron \cite{capacitated}, will be used in the proof.

\begin{lemma}
\label{complete-partial-profile-lemma}
\cite{capacitated} Let $G$ be an SP network. Let $\bm{s}$ be a feasible profile of $k$ agents in a game played on $G$, and let $\bm{s}'$ be a feasible profile of $r$ agents such that $r < k$. There exists an $s-t$ path in $G$ that is feasible together with the strategies of the $r$ agents in $\bm{s}'$ and uses only edges that are used in the profile $\bm{s}$.
\end{lemma}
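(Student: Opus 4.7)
The plan is to prove the lemma by structural induction on the series-parallel decomposition of $G$. Throughout, "feasible in $s'$" means that adding the path as one extra agent to $s'$ keeps every capacity satisfied, i.e., $x_e(s') < c_e$ for every edge $e$ on the path.

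Base case: $G$ is a single edge $e=(s,t)$. Feasibility of the $k$-agent profile $s$ forces $k\le c_e$; since $r<k$ we have $x_e(s')=r<c_e$, so the one-edge $s$-$t$ path works, and it trivially uses only edges of $s$ (assuming $k\ge 1$, which we may; if $k=0$ the hypothesis $r<k$ is vacuous).

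Inductive step, series composition $G=G_1\to G_2$ with intermediate vertex $v$: every $s$-$t$ path in $G$ splits uniquely into an $s$-$v$ path in $G_1$ and a $v$-$t$ path in $G_2$, so each agent's strategy in $s$ (and in $s'$) restricts to a feasible sub-profile in $G_1$ and in $G_2$ with $k$ and $r$ agents respectively. Apply the induction hypothesis in $G_1$ to obtain a sub-path $P_1$ that uses only edges used in $s|_{G_1}$ and is feasible in $s'|_{G_1}$; likewise obtain $P_2$ in $G_2$. Concatenating $P_1$ and $P_2$ gives the desired $s$-$t$ path in $G$, using only edges used in $s$ and feasible in $s'$ because feasibility was verified in each part.

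Inductive step, parallel composition $G=G_1\parallel G_2$: each agent's $s$-$t$ path lies entirely in $G_1$ or entirely in $G_2$. Let $k_i$ (resp.\ $r_i$) be the number of agents of $s$ (resp.\ $s'$) whose path lies in $G_i$, so $k_1+k_2=k$ and $r_1+r_2=r$. Since $r<k$, by pigeonhole there is an index $i\in\{1,2\}$ with $r_i<k_i$. The restrictions of $s$ and $s'$ to the agents using $G_i$ are feasible profiles on $G_i$ with $k_i$ and $r_i$ agents, so the induction hypothesis applied inside $G_i$ yields an $s$-$t$ path of $G_i$ (equivalently of $G$) using only edges used in $s|_{G_i}\subseteq s$, and feasible in $s'|_{G_i}$. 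Feasibility in $s'$ itself follows since edges outside $G_i$ do not occur on the path, and so their usage counts in $s'$ are irrelevant.

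The only delicate point, and the one I would single out as the main obstacle, is the parallel case: one must be careful that \emph{every} agent's path really lies entirely in $G_1$ or entirely in $G_2$ (a standard structural property of parallel composition, since $G_1$ and $G_2$ share only the endpoints $s$ and $t$), and that the strict inequality $r<k$ is preserved to yield strict inequality $r_i<k_i$ on at least one side, which is precisely what the induction hypothesis requires. Everything else is routine bookkeeping.
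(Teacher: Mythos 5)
Your proof is correct. Note that the paper does not prove this lemma at all---it is imported verbatim from Feldman and Ron \cite{capacitated}---so there is no in-paper argument to compare against; your structural induction on the series-parallel decomposition (splitting agent counts through the cut vertex in the series case, and using pigeonhole on $r_1+r_2=r<k=k_1+k_2$ in the parallel case) is the standard and correct way to establish it, and you correctly flag the one point that needs care, namely that in a parallel composition every $s$-$t$ path lies entirely within one side because the two components share only the terminals.
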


\begin{proof}[Proof of Lemma \ref{feasible-combined-profile-ep}]
Let $G_{OPT}$ be the subnetwork that contains only the edges that are used by an optimal strategy profile, and let $N$ denote the set of agents. We first define a specific profile $\bm{s}^*$ played on $G_{OPT}$, and then prove that for every coalition $C$, the strategy profile $(\bm{s}_C,\bm{s}^*_{-C})$ is feasible. Since $G$ is EP, $G_{OPT}$ is also EP.\footnote{This can observed by constructing $G_{OPT}$ using the same construction tree of $G$, excluding parallel composition operations of subnetworks that all their edges are not used in $G_{OPT}$.} We define $\bm{s}^*$ in two steps: First, we assign a strategy to as many agents as possible using recursion (on the structure of $G_{OPT}$). Then, we use Lemma \ref{complete-partial-profile-lemma} to extend this set of strategies to a profile of all agents. The profile $\bm{s}^*$ is defined using Algorithm \ref{alg-find-optimal-profile}, which chooses a specific profile from all the optimal strategy profiles. The algorithm gets as input the subnetwork $G_{OPT}$ that is used by an optimal profile and the strategy profile $\bm{s}$, which is a SE.

It is important to note that the algorithm might possibly define a strategy profile only for a subset of the agents.
In step 2, $N_1 \cap N_2 = \phi$, but it is possible that $N_1 \cup N_2 \subset N$ (since $G_{OPT}$ is a subnetwork of $G$). Therefore, by the end of step 2, it is possible that not all the agents in $N$ are assigned a strategy. In step 3(c), it is also possible that not all the agents in $N$ will be assigned a strategy (as an agent is assigned a strategy only if there is a path available in $G_1$). Thus, we have to extend this strategy profile to a feasible profile for all agents (using only edges from $G_{OPT}$). We do so by applying Lemma \ref{complete-partial-profile-lemma} to the network $G_{OPT}$, where $\bm{s}$ is any strategy profile that uses only edges from $G_{OPT}$, and $\bm{s}'$ is the partial profile that was defined above. This provides a full definition of the profile $\bm{s}^*$. Example~\ref{ex:alg-find-optimal-profile} that follows this proof illustrates the definition of $\bm{s}^*$.

\begin{algorithm}[tb]
\caption{Choosing the optimal profile $\bm{s}^*$}
\label{alg-find-optimal-profile}
\emph{Input:} $G_{OPT}$ is a graph, $N$ is a set of agents, and $\bm{s}$ is a strategy profile.

\emph{ChooseOptimalProfile}($G_{OPT}$, $N$, $\bm{s}$):
\begin{enumerate}
\item If $G_{OPT} = e$, where $e$ is a single edge, return a strategy profile in which the edge $e$ is assigned to all the agents in $N$ who use $e$ in $\bm{s}$.
\item If $G_{OPT} = G_1 \parallel G_2$:
 \begin{enumerate}
 \item Let $N_i \subseteq N$ be the set of agents that use an edge of $G_i$ ($i=1,2$).
 \item $\bm{s}^1 \leftarrow ChooseOptimalProfile(G_1,N_1,\bm{s})$
 \item $\bm{s}^2 \leftarrow ChooseOptimalProfile(G_2,N_2,\bm{s})$
 \item Return the union of the profiles: $(\bm{s}^1,\bm{s}^2)$.
 \end{enumerate}
\item If $G_{OPT} = G_1 \rightarrow e$ or $G_{OPT} = e \rightarrow G_1$, where $e$ is an edge:
 \begin{enumerate}
 \item $\bm{s}^1 \leftarrow ChooseOptimalProfile(G_1,N,\bm{s})$
 \item Each agent that has a strategy in $\bm{s}^1$ will also use the edge $e$.
 \item For any other agent that uses $e$ in $\bm{s}$, attempt to find an available path in $G_1$. If found, assign it together with the edge $e$ to the agent.
 \item Return the profile that was defined in the last three steps.
 \item[$\bullet$] In case it is possible to represent $G_{OPT}$ as both $G_1 \rightarrow e_1$ and $e_2 \rightarrow G_2$, choose the representation in which the edge $e_i$ is used by the maximal set of agents in $\bm{s}$.
 \end{enumerate}
\end{enumerate}
\end{algorithm}

We claim that the profile $\bm{s}^*$ satisfies the capacity constraints. In step 1 of the algorithm, an edge is assigned to the same agents that use it in the feasible profile $\bm{s}$ (due to the way $N$ is split in step 2). In step 3, the edge $e$ is used only by agents that use it in $\bm{s}$, and edges in $G_1$ are assigned in a way that satisfies the capacity constraints. Therefore, no edge exceeds its capacity.

Let $C$ be a coalition of agents. We prove that the strategy profile $\bm{s}_{comb} = (\bm{s}_C,\bm{s}^*_{-C})$ is feasible. Let $e$ be an edge. Let $M$ denote the set of agents that use $e$ in $\bm{s}$, and let $M^*$ denote the set of agents that use $e$ in $\bm{s}^*$. The set of agents that use $e$ in $\bm{s}_{comb}$ is $(M \cap C) \cup (M^* \cap (N \backslash C))$. If $e$ is used only in $\bm{s}$, it cannot exceed its capacity in $\bm{s}_{comb}$, since $M \cap C \subseteq M$, and the profile $\bm{s}$ is feasible. The same applies to edges that are used only in $\bm{s}^*$. It remains to consider edges that are used in both $\bm{s}$ and $\bm{s}^*$. There are two types of edges in EP networks: ones that are added to the graph through an extension of the source or sink, and all other edges. The algorithm assigns edges of the former type to agents in step 3, and edges of the latter type in step 1.

If edge $e$ is used in both $\bm{s}$ and $\bm{s}^*$, there are two cases:
\begin{enumerate}
\item If $e$ is assigned to agents in step 1 of the algorithm, then the set $N$ in this step contains all the agents that use $e$ in $\bm{s}$, namely, $M \subseteq N$, thus, $M \subseteq M^*$ ($e \in s_j \Rightarrow e \in s^*_j$). The set of agents that use $e$ in $\bm{s}_{comb}$ is $(M \cap C) \cup (M^* \cap (N \backslash C)) \subseteq M^*$. Since $\bm{s}^*$ is feasible, then $e$ does not exceed its capacity in $\bm{s}_{comb}$.
\item If $e$ is added to the network $G_{OPT}$ using extension of the source or the sink, the algorithm assigns $e$ to agents in step 3. There are two sub-cases.
\begin{enumerate}
\item In step 3(c), each agent that uses $e$ in $\bm{s}$ was assigned a path in $G_1$ in $\bm{s}^*$. In that case, $M \subseteq M^*$. The set of agents that use $e$ in $\bm{s}_{comb}$ is $(M \cap C) \cup (M^* \cap (N \backslash C)) \subseteq M^*$. Since $\bm{s}^*$ is feasible, we get that $e$ does not exceed its capacity in $\bm{s}_{comb}$.
\item There is an agent that uses $e$ in $\bm{s}$ but does not use $e$ and $G_1$ in $\bm{s}^*$. Due to the definition of $\bm{s}^*$, there is no available path in $G_1$. Therefore, no agent can be assigned a path in $G_1$ later. Each agent that uses $e$ in $\bm{s}^*$ must use $G_1$. Hence, no agents will be assigned the edge $e$ after it is used in step 3 of the algorithm. Thus, $M^* \subseteq M$. The set of agents that use $e$ in $\bm{s}_{comb}$ is $(M \cap C) \cup (M^* \cap (N \backslash C)) \subseteq M$. Since $\bm{s}$ is feasible, then $e$ does not exceed its capacity in $\bm{s}_{comb}$.
\end{enumerate}
\end{enumerate}

We conclude that no edge exceeds its capacity in $\bm{s}_{comb}$.
\end{proof}

\begin{figure}
\centering
\includegraphics{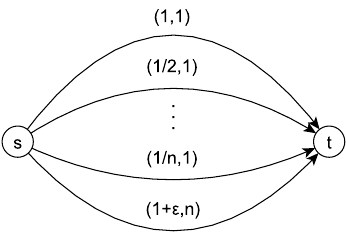}
\caption{\label{fig:PoS-Hn-tightness}\cite{capacitated} A game played on an EP network with $PoS=\frac{H_n}{1+\epsilon}$}
\end{figure}

We turn to discuss the tightness of the upper bound established by the previous lemma. Feldman and Ron \cite{capacitated} show an example of a game with $n$ agents, in which the PoS is $\frac{H_n}{1+\epsilon}$ for every $\epsilon > 0$. Their example is included here as Figure~\ref{fig:PoS-Hn-tightness}. The only NE in this example is also a SE. This shows that the upper bound of $H_n$ on the SPoA in EP networks is tight.
\end{proof}

\begin{example}\label{ex:alg-find-optimal-profile}
\begin{figure}
\centering
\includegraphics{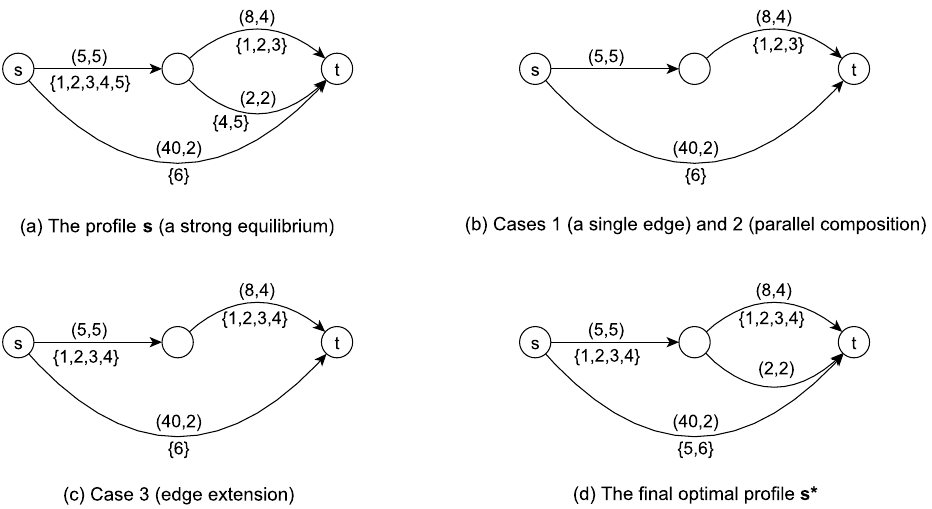}
\caption{\label{fig:FindOptimalProfileExample}An illustration of Algorithm \ref{alg-find-optimal-profile}}
\end{figure}

Figure \ref{fig:FindOptimalProfileExample} provides an example that illustrates the definition of $\bm{s}^*$.
Recall that Algorithm~\ref{alg-find-optimal-profile} gets as input a SE ($\bm{s}$) and the subnetwork that is used by an optimal strategy profile ($G_{OPT}$). The algorithm is used to choose an optimal profile $\bm{s}^*$ from all possible optimal strategy profiles that use only edges from $G_{OPT}$.
Figure \ref{fig:FindOptimalProfileExample}(a) shows the strategies of $n=6$ agents in the profile $\bm{s}$, which is a SE. Figures \ref{fig:FindOptimalProfileExample}(b)-(d) show the construction of $\bm{s}^*$ over $G_{OPT}$. Figure \ref{fig:FindOptimalProfileExample}(b) illustrates the case of parallel composition: The lower edge is used by agent $6$ (as in profile $\bm{s}$), and we use the algorithm recursively in order to find a profile for agents $1,\ldots,5$ who use the upper subnetwork. Figure \ref{fig:FindOptimalProfileExample}(c) shows the case of edge extension: The edge of cost 8 is a subnetwork that was extended with the edge of cost 5. The edge of cost $8$ is used by the agents $1,2,3$ in $\bm{s}$, hence it will be used by the same agents in $\bm{s}^*$. The algorithm now defines which agents will use the edge of cost 5. First of all, it has to be used by agents $1,2,3$. We consider the rest of the agents that use this edge in $\bm{s}$. One of them (agent $4$) can use the edge of cost 8, and the other (agent $5$) can not. Therefore, agent $4$ is assigned the same path as agents $1,2,3$, and the strategy of agent $5$ is left undefined. Figure \ref{fig:FindOptimalProfileExample}(d) shows the final profile $\bm{s}^*$, after it was extended to all agents using Lemma \ref{complete-partial-profile-lemma}.
\end{example}

\begin{remark}
Let $c_{max} = \max_{e \in E} c_e$ be the maximal capacity of an edge. In the case of capacitated games, the bound of $H_n$ on the SPoA that was established in \cite{CostSharingSE} can be slightly improved to $H_{c_{max}}$. In their proof, Epstein et al.\ used the potential function to bound the cost of every SE. For a SE $\bm{s}$ and an optimal solution $\bm{s}^*$, they have shown that $cost(\bm{s}) \leq \Phi(\bm{s}^*)$, where the potential function $\Phi$ is defined by $\Phi(\bm{s}) = \sum_{e \in E}{p_e \cdot H_{x_e(\bm{s})}}$ (where $x_e(\bm{s})$ is the number of agents who use edge $e$ in $\bm{s}$). When capacity constraints are added, in every feasible profile $\bm{s}$ it holds that $x_e(\bm{s}) \leq c_{max}$. Therefore, we can bound the cost of every SE by $\Phi(\bm{s}^*) \leq \sum_{e \in E}{p_e \cdot H_{c_{max}}} = H_{c_{max}} \cdot cost(\bm{s}^*)$.
\end{remark}

The following theorem extends the class of networks for which the SPoA is bounded by $H_n$.

\begin{theorem}\label{series-comp-spoa}
For every symmetric CFCSC game played on a network that is a series composition of EP networks, it holds that $SPoA \leq H_n$ (if a SE exists).
\end{theorem}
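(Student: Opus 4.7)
The plan is to reduce the statement to Theorem \ref{thm:ep-spoa} by decomposing along the series structure. Write $G = G_1 \rightarrow G_2 \rightarrow \ldots \rightarrow G_k$, where each $G_i$ is an EP network with source $s_i$ and sink $t_i$. Every $s$-$t$ path in $G$ is a concatenation of $s_i$-$t_i$ paths in each $G_i$, so each strategy profile $s$ on $G$ splits uniquely into profiles $s^{(1)}, \ldots, s^{(k)}$, one on each component, all involving the same set of $n$ agents. Feasibility decomposes across components (an edge of $G_i$ is violated in $s$ iff it is violated in $s^{(i)}$), the social cost satisfies $cost(s) = \sum_{i=1}^{k} cost(s^{(i)})$, and the optimum satisfies $OPT(G) = \sum_{i=1}^{k} OPT(G_i)$, since each $G_i$ can be optimized independently.

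The key step is to argue that if $s$ is a SE in the game on $G$, then each restriction $s^{(i)}$ is a SE in the induced symmetric CCS game on $G_i$ (with the same $n$ agents and the edge costs/capacities inherited from $G$). Suppose for contradiction that some coalition $C$ had a feasible, strictly profitable deviation $s'^{(i)}$ from $s^{(i)}$ in $G_i$. Extend this to a profile $s'$ on $G$ by leaving the $G_j$-components ($j \ne i$) of every agent unchanged. By the decomposition of feasibility, $s'$ is feasible in $G$; by the decomposition of costs, each member of $C$ strictly reduces its $G_i$-cost and preserves its cost in every other component, hence strictly reduces its total cost in $G$. This contradicts $s$ being a SE in $G$. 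In particular, the existence of a SE on $G$ implies that each $G_i$ admits a SE, so the hypothesis of Theorem \ref{thm:ep-spoa} is satisfied on each component.

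Applying Theorem \ref{thm:ep-spoa} to $s^{(i)}$ in $G_i$ yields $cost(s^{(i)}) \leq H_n \cdot OPT(G_i)$ for every $i$. Summing over $i$ and using the two decompositions above gives
\[
cost(s) \;=\; \sum_{i=1}^{k} cost(s^{(i)}) \;\leq\; H_n \sum_{i=1}^{k} OPT(G_i) \;=\; H_n \cdot OPT(G),
\]
establishing the upper bound. For tightness, note that a single EP network is trivially a series composition of EP networks (with $k=1$), so the example witnessing tightness in Theorem \ref{thm:ep-spoa} already lies in the present class and achieves SPoA $= H_n$.

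I do not expect a substantive obstacle here: both the feasibility decomposition and the cost decomposition along a series composition follow directly from the definitions, and the SE-restricts-to-SE argument is essentially a one-line appeal to the fact that deviations local to one component leave all other components unchanged. The only minor point that needs verbal care is that when we extend a coalitional deviation from $G_i$ to $G$, we do not alter the strategies of agents outside $C$ in $G_i$ either, so the resulting profile is indeed $(s'^{(i)}_C, s^{(i)}_{-C})$ on $G_i$, which is exactly the deviation assumed to be feasible and profitable.
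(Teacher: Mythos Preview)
Your proposal is correct and follows essentially the same decomposition as the paper's proof: show that a SE on $G$ restricts to a SE on each $G_i$, then apply the EP bound componentwise and aggregate (you sum directly, the paper bounds by $\max_i SPoA(G_i)$; both work). One small correction on tightness: Theorem~\ref{thm:ep-spoa} as stated does not assert tightness, so you cannot cite it for the lower bound; the paper instead invokes the example of Feldman and Ron~\cite{capacitated} in which $PoS = H_n/(1+\epsilon)$ on an EP network and the unique NE is also a SE.
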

\begin{proof}
Let $G=G_1 \rightarrow G_2 \rightarrow \ldots \rightarrow G_k$ be a network. It is sufficient to prove that the SPoA of a game played on $G$ is upper bounded by the maximal SPoA of a game played on one of the subnetworks $G_1,\ldots,G_k$. Denote by $cost_{G_i}(\bm{s})$ the social cost of the strategy profile $\bm{s}$ in a game played only on the subnetwork $G_i$, and let $\bm{s}^*$ be an optimal profile. For every strategy profile $\bm{s}$, it holds that
\[
cost_G(\bm{s})=cost_{G_1}(\bm{s}) + \ldots + cost_{G_k}(\bm{s}).
\]
If $\bm{s}$ is a SE in $G$, the induced strategy profile in $G_i$ is also a SE. We get that
\begin{eqnarray*}
SPoA(G) & = & \max_{\bm{s}\text{ is SE}}{\frac{cost_G(\bm{s})}{cost_G(\bm{s}^*)}} \\
& = & \max_{\bm{s}\text{ is SE}}{\frac{cost_{G_1}(\bm{s}) + \ldots + cost_{G_k}(\bm{s})}{cost_{G_1}(\bm{s}^*) + \ldots + cost_{G_k}(\bm{s}^*)}} \\
& \leq & \max_{\bm{s}\text{ is SE}} \max_i{\frac{cost_{G_i}(\bm{s})}{cost_{G_i}(\bm{s}^*)}} \\
& \leq & \max_i{SPoA(G_i)}.
\end{eqnarray*}
\end{proof}

Note that SPP networks are a special case of the class specified in the last theorem, and therefore the upper bound of $H_n$ applies to SPP networks as well.

\subsection{SP and General Networks}
For SP networks, it is established in \cite{capacitated} that $PoA \leq n$, which directly implies that $SPoA \leq n$.
We provide an example showing that this bound is tight.

\begin{theorem}\label{thm:sp-network-spoa-n}
For every $\epsilon>0$, there exists an SP network $G$ and a CFCSC game played on $G$ such that $SPoA \geq \frac{n}{1+\epsilon}$.
\end{theorem}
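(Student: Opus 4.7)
The plan is to exhibit, for each $\epsilon > 0$, a specific SP network $G$ together with a symmetric CCS game on $G$ with $n$ agents attaining $SPoA \geq n/(1+\epsilon)$. Theorems~\ref{thm:ep-spoa} and~\ref{series-comp-spoa} already give $SPoA \leq H_n$ whenever the network is a series composition of EP networks (and in particular whenever it is SPP), so the candidate $G$ must lie outside this class; equivalently, its decomposition must contain a component that is the series composition of two non-trivial (non-single-edge) EP networks. A natural candidate is $G = [(A_{1} \parallel A_{2}) \to (B_{1} \parallel B_{2})] \parallel C$, which places a ``cross'' sub-network in parallel with a bypass edge $C$. This $G$ is SP by construction; it is not EP because its left branch is the series composition of two non-trivial EP networks; and it is not SPP because its four cross-paths share edges and the top-level has no internal cut vertex, so the only candidate SPP decomposition would be a single parallel-paths block, which fails since the paths are not edge-disjoint.

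I will then choose edge costs and capacities so that (i) the profile $s$ in which every one of the $n$ agents uses the bypass $C$ is a SE with per-agent cost close to $1$, giving $cost(s) \approx n$, and (ii) the optimum profile $s^{*}$, which routes all $n$ agents through the cross substructure, has total cost close to $1+\epsilon$. The ratio then yields the desired $SPoA \geq n/(1+\epsilon)$. Concretely, $C$ will be assigned cost $\approx n$ and capacity $n$, while the edges $A_{i}, B_{j}$ will be assigned small costs and capacities tuned so that $s^{*}$ saturates them exactly.

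The crucial step is verifying that $s$ is actually a SE, i.e.\ that no coalition $C' \subseteq N$ has a feasible deviation that strictly improves the cost of every member of $C'$. Single-agent deviations from $s$ to an individual cross path $A_{i} B_{j}$ can be neutralized by tuning $p_{A_{i}}+p_{B_{j}}$ to equal the per-agent cost on $C$. For coalitional deviations the non-SPP structure is the key: the capacities of the shared edges will be calibrated so that any coalition $C'$ attempting to partially reroute onto the cross substructure produces a combined profile $(s'_{C'}, s_{-C'})$ that is either infeasible (some capacity on an $A_{i}$ or $B_{j}$ is exceeded) or leaves at least one member of $C'$ paying at least as much as on $C$. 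This obstruction---the absence of a coalition-compatible optimum---is exactly what Lemma~\ref{feasible-combined-profile-ep} rules out in EP networks, and exploiting its failure on $G$ is what allows $SPoA$ to far exceed $H_n$.

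The principal obstacle is a careful case analysis verifying the SE property against \emph{all} coalitional deviations, not merely those that target $s^{*}$. This amounts to enumerating the edge-use patterns a coalition could adopt on the cross substructure and showing that in each such pattern either some capacity is saturated by the non-deviators (who remain on $C$ but, more delicately, constrain the feasible set of the deviators' target profile) or some coalition member would pay at least as much as on $C$. Once this case analysis is complete, the lower bound $SPoA \geq n/(1+\epsilon)$ follows directly from (i) and (ii).
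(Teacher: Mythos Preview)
Your construction cannot work as stated. If all $n$ agents sit on the bypass $C$ and the optimum routes everyone through the cross, then the cross must have total capacity at least $n$ on each side, so after relabelling $c_{A_1}\ge n/2$ and $c_{B_1}\ge n/2$. Consider the coalition of $k=\min(c_{A_1},c_{B_1})\ge n/2$ agents deviating together from $C$ to the single path $A_1B_1$; this is feasible because the remaining agents stay on $C$ and touch no cross edge. Each deviating agent pays $(p_{A_1}+p_{B_1})/k$, so for the all-on-$C$ profile to be a SE you need $(p_{A_1}+p_{B_1})/k\ge p_C/n$, i.e.\ $p_{A_1}+p_{B_1}\ge k\,p_C/n\ge p_C/2$. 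But then the optimum (which uses $A_1$ and $B_1$) has cost at least $p_C/2$, so $SPoA\le 2$ no matter how large $n$ is. More generally, any fixed-size cross in parallel with a bypass is defeated by the same argument: a large sub-coalition can pile onto the highest-capacity cross path, and blocking that deviation forces the optimum to be $\Omega(p_C)$. Your ``principal obstacle'' is not merely a tedious case analysis; the case analysis actually fails.

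The paper's construction is essentially the opposite of your plan. The network has an ``upper'' chain of $n$ stages in series, each stage consisting of a cost-$0$ edge of capacity $n-1$ in parallel with a cost-$1$ edge of capacity $1$, together with a single bypass edge of cost $1+\epsilon$ and capacity $1$. In the SE all $n$ agents use the upper chain; in every stage exactly one of them is forced onto the cost-$1$ edge, so every agent pays exactly $1$ and the social cost is $n$. Any coalition that stays in the upper part must still fill all the cost-$1$ slots (the cost-$0$ edges are saturated by the non-deviators), so some member pays at least $1$; and moving to the bypass costs $1+\epsilon>1$. The bypass is used only in the \emph{optimal} profile: one agent takes it, freeing the remaining $n-1$ to use only cost-$0$ edges, for total cost $1+\epsilon$. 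Two features of this construction are essential and absent from yours: the non-bypass part grows with $n$, and the SE places the agents \emph{inside} the congested structure rather than on the bypass.
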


\begin{figure}
\centering
\includegraphics{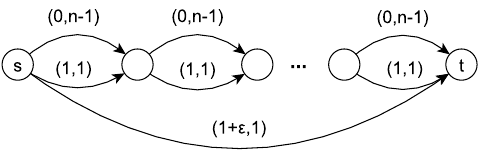}
\caption{\label{fig:SP-SPoAn}A game played on an SP network with $SPoA \geq \frac{n}{1+\epsilon}$}
\end{figure}

\begin{proof}
Consider the game played by $n$ agents on the graph depicted in Figure \ref{fig:SP-SPoAn}.
The strategy profile in which each agent uses a different edge of cost $1$ and $n-1$ additional edges of cost $0$ is a SE: In this profile, each agent pays $1$, and the only way to reduce this cost is by using a path of edges that cost $0$. However, the edges that cost $0$ are already used to their maximum capacity, and any deviation that allows an agent to use a path that costs $0$ will increase the cost of another agent.

The optimal profile in this game is the profile where one agent uses the lower edge of cost $1+\epsilon$ and $n-1$ agents use the upper edges of cost $0$. It follows that $SPoA \geq \frac{n}{1 + \epsilon}$, as stated.
\end{proof}

For general networks, the SPoA can be arbitrarily high, even with only two agents.

\begin{theorem}\label{example-unbounded-spoa}
For every real number $R$, there exists a CFCSC game with two agents in which the SPoA is greater than $R$.
\end{theorem}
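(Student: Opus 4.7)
My plan is to prove Theorem~\ref{example-unbounded-spoa} by explicit construction: for each $R$ I will exhibit a two-agent CCS game on a specific non-series-parallel network together with a strong equilibrium whose social cost exceeds $R$ times a feasible social optimum. The underlying network must fail to be SP, because on every SP network the SPoA is at most $n=2$ (as noted at the start of the preceding subsection). The flexibility I will rely on is exactly what non-SP topologies provide: several $s$-$t$ paths whose mutual feasibility under capacity constraints depends intricately on the routing, allowing a highly asymmetric SE in which one agent's cost dominates the total.

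My design target is a profile $s$ in which agent~$1$ uses a cheap $s$-$t$ path of cost $\Theta(1)$ while agent~$2$ is stuck on an expensive $s$-$t$ path of cost $\Theta(R)$, so that $\mathrm{cost}(s)=\Theta(R)$, together with a feasible profile $s^\ast$ in which both agents use cheap edge-disjoint paths of total cost $O(1)$. I would verify that $s$ is a Nash equilibrium by checking that every cheaper $s$-$t$ path to which agent~$2$ could switch shares a capacity-one edge with agent~$1$'s current path --- making each unilateral deviation of agent~$2$ infeasible --- and by choosing costs so that agent~$1$'s unilateral alternatives are either equally cheap (and hence indifferent) or strictly more expensive.

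The more delicate step is verifying the strong equilibrium condition. Since $n=2$, the only non-trivial coalition is $\{1,2\}$, so I must show that every feasible joint profile $s'$ fails to strictly improve at least one agent. The critical candidate is $s^\ast$ itself, and the key trick is to endow agent~$1$ with two different cheap $s$-$t$ paths of \emph{identical} individual cost --- one used in $s$, the other in $s^\ast$ --- so that when the coalition moves to $s^\ast$ (where agent~$1$ vacates the bottleneck and frees agent~$2$'s cheap route), agent~$1$'s cost is unchanged and the deviation is not strict for both agents. The ratio $\mathrm{cost}(s)/\mathrm{cost}(s^\ast)=\Theta(R)$ then yields $\mathrm{SPoA}>R$. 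The main obstacle is the construction itself: fair cost-sharing naturally tempts the expensive agent~$2$ to piggyback on a cheap shared edge of agent~$1$ at half cost, and blocking every such piggyback while still admitting a feasible $O(1)$-cost optimum requires several parallel cheap paths combined with capacity-one bottlenecks that cannot coexist in an SP topology --- which is precisely why the theorem's negative result requires a non-SP graph.
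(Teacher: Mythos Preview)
Your high-level strategy---construct a non-SP two-agent instance in which a strong equilibrium has cost $\Theta(R)$ while some feasible profile has cost $O(1)$---is exactly the paper's approach. The paper simply exhibits such an instance (Figure~\ref{fig:unbounded-SPoA}), observes that the profile avoiding the ``inner edges'' is a SE of cost $24R+5$, and notes that the optimum avoids the edge of cost $24R$ for total cost $24$, giving $\mathrm{SPoA}\geq(24R+5)/24>R$.

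Two remarks on your write-up. First, your SE argument as written only handles the single joint deviation to $s^\ast$; to rule out \emph{all} coalitional deviations you need the stronger fact (which your capacity-$1$ design does deliver once agent~$1$ sits on a cheapest $s$-$t$ path and no sharing is possible) that agent~$1$'s cost is already the global minimum over every feasible profile, so no coalition containing her can strictly improve her. State that directly rather than singling out $s^\ast$ as ``the critical candidate.'' Second, and more importantly, the proposal never actually gives the construction: you describe the properties the gadget must satisfy but stop before writing down vertices, edges, costs, and capacities and before carrying out the case check that the profile is a NE and a SE. For a theorem proved by example, the example \emph{is} the proof; the paper's version is three sentences long precisely because all the content resides in the figure.
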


\begin{figure}
\centering
\includegraphics{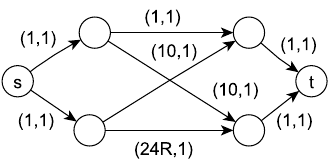}
\caption{\label{fig:unbounded-SPoA}A game with unbounded SPoA}
\end{figure}

\begin{proof}
Consider the game played by two agents on the graph presented in Figure \ref{fig:unbounded-SPoA}. The strategy profile in which no agent uses the inner edges is a SE: The agent that uses the upper path incurs the lowest possible cost and will not participate in any deviation, while the other agent uses the only path that is still available. However, the optimal strategy profile will avoid the edge of cost $24R$. Thus, $SPoA \geq \frac{24R+5}{24} > R$.
\end{proof}

\subsection{Homogeneous Capacities}
We consider a more restricted form of capacity constraints, in which all the edges have the same capacity. In this special case, we show that every SE in games played on EP networks is optimal.

\begin{theorem}
For every feasible symmetric CFCSC game with homogeneous capacities played on an EP network, it holds that $SPoA = 1$.
\end{theorem}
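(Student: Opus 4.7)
The plan is to prove the statement by structural induction on the EP network $G$.

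The base case, where $G$ is a single edge, is immediate: feasibility forces $n \le c$, and the unique feasible strategy profile is both a SE and socially optimal. For the series-with-edge case $G = e \rightarrow G_1$ (the case $G_1 \rightarrow e$ is symmetric), all agents must traverse $e$, so each agent incurs $p_e/n$ regardless of strategy, and the social cost picks up an additive $p_e$ in every feasible profile. Hence a profile is a SE in $G$ if and only if its restriction to $G_1$ is a SE in the game on $G_1$ with $n$ agents and the same homogeneous capacity $c$, and socially optimal profiles correspond. The induction hypothesis applied to $G_1$ closes this case.

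For the parallel composition case $G = G_1 \parallel G_2$, let $s$ be a SE in which $n_i$ agents use $G_i$. I would first argue that the restriction $s|_{G_i}$ is a SE in the game on $G_i$ with $n_i$ agents: if some coalition within $G_i$ could profitably deviate in the induced game, the same coalition could profitably deviate in $G$ (the agents on the other side being unaffected), contradicting that $s$ is a SE. By the induction hypothesis, each $s|_{G_i}$ is socially optimal, so $cost(s) = f_1(n_1) + f_2(n_2)$, where $f_i(k)$ denotes the minimum social cost of routing $k$ agents on $G_i$ under homogeneous capacity $c$. It then remains to show that the split $(n_1, n_2)$ minimizes $f_1(k_1) + f_2(k_2)$ over all feasible splits with $k_1 + k_2 = n$.

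The main obstacle is this last step. Assume for contradiction that $(n_1^*, n_2^*)$ is a strictly better split; without loss of generality $n_1 > n_1^*$, and set $d = n_1 - n_1^*$. The plan is to exhibit a profitable coalitional deviation by $d$ agents of $N_1$ moving into $G_2$. Using Lemma \ref{feasible-combined-profile-ep}, I fix an optimal profile $s^*$ compatible with $s$ in the combined-feasibility sense. The key auxiliary fact, which I would establish by a secondary structural induction on $G_2$, is an \emph{extension property} specific to homogeneous capacities on EP networks: every optimal profile on $G_2$ with $n_2$ agents extends to an optimal profile with $n_2^*$ agents by keeping the existing agents on their paths and adding $d$ new agents on additional $s$-$t$ paths. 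Taking $C$ to be the $d$ highest-cost agents of $N_1$ in $s|_{G_1}$ and matching them, in decreasing-cost order, with the $d$ new slots in the extended profile, the strict improvement $f_1(n_1) - f_1(n_1^*) > f_2(n_2^*) - f_2(n_2)$, combined with the optimality of $s|_{G_1}$ and the marginal-cost structure under homogeneous capacities on EP networks, yields the pointwise cost dominance needed for each member of $C$ to strictly benefit. This contradicts that $s$ is a SE, completing the induction.
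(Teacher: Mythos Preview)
Your approach diverges from the paper's, and the final step of your parallel-composition case is not justified. The paper avoids structural induction altogether: it writes $G = G_1 \parallel \cdots \parallel G_k$ with each $G_i$ not further parallel-decomposable, observes that every such $G_i$ is either a single edge or an EP network extended by an edge and hence contains a bottleneck edge of capacity $c$, so at most $c$ agents can lie in any $G_i$. In any SE all agents inside a given $G_i$ must then share the cheapest $s$--$t$ path in $G_i$ (feasible because all capacities equal $c$), which reduces the game to one on parallel edges with homogeneous capacity $c$; that case is finished by a short direct argument. This bottleneck reduction sidesteps your split-optimality question entirely.

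Your gap is the pointwise-dominance step. Even granting the extension property, the aggregate inequality $f_1(n_1)-f_1(n_1^*)>f_2(n_2^*)-f_2(n_2)$ does not by itself force each of the $d$ movers to strictly benefit, and you do not say which extension to take. Let $G_1$ be a single edge of cost $10$, $G_2$ two parallel edges of costs $1$ and $5$, homogeneous capacity $c=2$, $n=3$, and let $s$ place two agents on $G_1$ (each paying $5$) and one on the cost-$1$ edge of $G_2$. Both restrictions $s|_{G_i}$ are optimal, and the unique strictly better split is $(0,3)$, so $d=2$. One optimal extension of $G_2$ from one to three agents adds one agent to the cost-$1$ edge and one to the cost-$5$ edge, giving new-slot costs $\tfrac12$ and $5$; the mover assigned the cost-$5$ slot pays $5$ before and $5$ after, so this deviation is not strictly profitable for the whole coalition. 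A different optimal extension (both movers on the cost-$5$ edge, each paying $\tfrac52$) does happen to work here, but your proposal neither specifies the extension nor argues that a pointwise-dominant one always exists; your appeal to Lemma~\ref{feasible-combined-profile-ep} gives only feasibility of mixed profiles, not any cost control. Without closing this, the contradiction in the parallel case is not established.
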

\begin{proof}
Let $\Delta$ be a feasible symmetric game played on an EP network $G$, and denote by $c$ the capacity of the edges. We show that every SE in $\Delta$ is optimal. We first show that there is a game that is played on a network of parallel edges, such that every SE in that game is optimal if and only if every SE in $\Delta$ is optimal. Let $G_1, \ldots, G_k$ be EP networks such that $G = G_1 \parallel G_2 \parallel \ldots \parallel G_k$, and none of $G_1,\ldots,G_k$ can be represented as a parallel composition of two networks. Since $G_i$ is an EP network, and cannot be decomposed into two parallel subnetworks, $G_i$ must be a single edge or an EP network that is extended by a single edge. It follows that in every strategy profile, no more than $c$ agents can use edges that belong to $G_i$. If $G_i$ is a single edge, its capacity is $c$, and if $G_i$ is an EP network extended by a single edge $e$, every agent that uses edges from $G_i$ must also use $e$, which has a capacity of $c$ agents.

Next, we claim that in every SE, all the agents that use edges from $G_i$ use the same path, which is the lowest-cost path in $G_i$ (if there is more than one path with minimal cost, one of these paths will be shared by all the agents). If there are agents who use a path of higher cost in $G_i$, they can deviate to the lowest-cost path and reduce their costs. This deviation satisfies the capacity constraints, as at most $c$ agents use edges from $G_i$, and the capacity of all the edges in the path is $c$.

Using the last observation, we can look at a different game, that is played on a network of parallel edges, and that every SE in $\Delta$ corresponds to a SE in the new game. For each of $G_1, \ldots, G_k$, we compute the minimum cost of an $s$-$t$ path that goes through that subnetwork, and replace the subnetwork with an edge that has that cost and a capacity of $c$ agents. Given a SE in the original game, we construct a strategy profile for the new game, such that the agents who use the minimal cost path in $G_i$ will use the edge that replaced $G_i$ in the new network. The costs of these two strategy profiles are the same for each agent.

It is only left to show that when all the capacities are equal, every SE in a game played on a network of parallel edges is optimal. Let $\bm{s}$ be a SE in a game played on a network of parallel edges, where the capacity of each edge is $c$. Denote by $e$ the edge that has the maximal cost of all the edges that are used in $\bm{s}$. Assume (by contradiction) that there is another edge $e'$, such that $p_{e'} < p_e$ ($e'$ is cheaper) and $x_{e'}(\bm{s}) < c$ ($e'$ is available). The number of agents who can deviate from $e$ to $e'$ is at most $\min \{x_e(\bm{s}), c - x_{e'}(\bm{s})\}$. If this number of agents deviate to $e'$, their cost will reduce from $\frac{p_e}{x_e(\bm{s})}$ to at most $\frac{p_{e'}}{x_e(\bm{s})}$. This contradicts our assumption that $\bm{s}$ is a SE. Therefore, every SE uses $\left \lceil \frac{n}{c} \right \rceil$ edges of minimal total cost, as does the optimal solution. We conclude that $SPoA = 1$.
\end{proof}

By applying the same arguments used in the proof of Theorem \ref{series-comp-spoa}, the last result can be extended to every network that is a series composition of EP networks, particularly, SPP networks.

\begin{corollary}
For every feasible symmetric CFCSC game with homogeneous capacities, that is played on a network that is a series composition of EP networks, it holds that $SPoA = 1$.
\end{corollary}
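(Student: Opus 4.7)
The plan is to mimic, almost verbatim, the reduction used in the proof of Theorem~\ref{series-comp-spoa}, but plug in the stronger bound of $SPoA=1$ for EP networks with homogeneous capacities (the preceding theorem) in place of $H_n$. Concretely, write $G = G_1 \rightarrow G_2 \rightarrow \ldots \rightarrow G_k$, where each $G_i$ is EP, and let $\Delta$ be a feasible symmetric CCS game on $G$ with homogeneous capacity $c$ (so every edge of every $G_i$ also has capacity $c$). The goal is to show that every SE of $\Delta$ has cost equal to the optimum.

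First I would set up the additive decomposition of social cost. Because every $s$-$t$ path in a series composition restricts to an $s_i$-$t_i$ path in each $G_i$, any strategy profile $s$ of $\Delta$ induces a profile $s^{(i)}$ on $G_i$ for each $i$, and both feasibility and cost decompose as $\text{cost}_G(s) = \sum_i \text{cost}_{G_i}(s^{(i)})$. In particular, if $s^*$ is a socially optimal profile in $\Delta$, then each restriction $s^{*,(i)}$ must be socially optimal in the induced CCS game on $G_i$ (otherwise, replacing the restriction with a cheaper feasible profile on $G_i$ and concatenating would beat $s^*$ in $\Delta$, since the homogeneous capacity bound $c$ is preserved component-wise).

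Next I would verify that a SE of $\Delta$ induces a SE on each component. If $s$ is a SE of $\Delta$ and some coalition $C$ had a profitable joint deviation in $G_i$, then the coalition could execute that deviation in $\Delta$ while leaving its strategies on the other components $G_j$ unchanged; since costs add across components, each member of $C$ would strictly improve, contradicting the fact that $s \in SE(\Delta)$. Here I should note that the deviation remains feasible in $G$ because feasibility on $G_j$ for $j\neq i$ is inherited from $s$ and feasibility on $G_i$ is ensured by the hypothesized component deviation (this is where homogeneous capacities play no special role beyond what the previous theorem already required). Applying the preceding theorem to each $G_i$ then yields $\text{cost}_{G_i}(s^{(i)}) = \text{cost}_{G_i}(s^{*,(i)})$ for each $i$, so summing gives $\text{cost}_G(s) = \text{cost}_G(s^*)$, as required.

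The only mildly delicate point, and the one I would spend the most care on, is checking that the induced game on each $G_i$ is itself a \emph{feasible} homogeneous-capacity CCS game with the same $n$ agents, so that the preceding theorem actually applies; feasibility of $\Delta$ on $G$ immediately provides a feasible profile restriction on each $G_i$, and all edges in $G_i$ share the common capacity $c$ inherited from $G$. Since all hypotheses of the previous theorem hold on each component and the $SPoA$ is bounded below by $1$ by definition, we obtain $SPoA(\Delta) = 1$.
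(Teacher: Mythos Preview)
Your proposal is correct and follows essentially the same approach as the paper, which simply invokes the argument of Theorem~\ref{series-comp-spoa} (the additive cost decomposition, the fact that a SE in $G$ restricts to a SE in each $G_i$, and the bound $SPoA(G)\le\max_i SPoA(G_i)$) with the previous theorem supplying $SPoA(G_i)=1$. You have merely spelled out explicitly the points the paper leaves implicit (optimality of the restricted optimum and feasibility of the induced component games), which is fine.
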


When considering more general networks, limiting the capacities to be homogeneous does not improve the SPoA. In the example presented in the proof of Theorem \ref{thm:sp-network-spoa-n}, which shows that the SPoA in SP networks can be arbitrarily close to $n$, we can replace each edge that costs $0$ with $n-1$ parallel edges that cost $0$ and have unit capacity. This way all the edges have unit capacity, and the theorem still holds for homogeneous capacities. In the example that shows that the SPoA can be unbounded (in non-SP networks), all the edges already have unit capacities.

\section{Strong Price of Stability}
In some cases, the best SE may be of interest as well, for example, if there exists a central entity that can coordinate the agents around an initial equilibrium. Clearly, it always holds that $PoS \leq SPoS \leq SPoA$ (since every SE is a NE), so the upper bounds on the SPoA apply to the SPoS as well.
Interestingly, the upper bounds on the SPoS in all the classes of network topologies considered here match the upper bounds on the SPoA. This is in stark contrast to previous settings, which exhibited large gaps between worst-case and best-case equilibria.
In what follows, we show that all the upper bounds on the SPoA established in the previous section are tight with respect to the best SE.

For EP and SPP networks, the bound of $H_n$ is tight with respect to the best SE due to an example provided in \cite{capacitated}, in which $PoS = \frac{H_n}{1+\epsilon}$ for every $\epsilon > 0$. The example appears here in the proof of Theorem~\ref{thm:ep-spoa} (see Figure~\ref{fig:PoS-Hn-tightness}).

For SP networks, we have shown that $SPoA \leq n$, and the following theorem shows a game for which $SPoS = \Omega(n)$.

\begin{theorem}\label{thm:SPoS-omega-n}
There exists a symmetric CFCSC game played on an SP network, in which the SPoS is at least $\Omega(n)$.
\end{theorem}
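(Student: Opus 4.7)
\medskip
\noindent\emph{Proof plan.}
The proof is by explicit construction of a family of symmetric CCS games, one per $n$, on a fixed SP network, for which some SE exists but every SE has social cost $\Omega(n)$ times the optimum. Since Theorem~\ref{series-comp-spoa} already gives $SPoS\leq H_n=O(\log n)$ on any SPP network, the example must be placed on an SP network that is not SPP, and by Lemma~\ref{lem:spp-not-embedding} such a network must embed one of the two minimal non-SPP networks of Figure~\ref{fig:non-spp}. I would accordingly take as the backbone a Braess-style subnetwork (enlarged by parallel zero-cost unit-capacity dummy edges and suitable extensions to accommodate $n$ agents), and design costs and capacities so that a simple ``scattered'' profile will play the role of the expensive SE while the social optimum exploits a cheap shared bypass that happens to be coalitionally unstable.

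The plan of attack has four steps. First, I would specify the network along with a cost/capacity assignment so that: (a) the social optimum routes all $n$ agents through a cheap shared bypass of total cost $O(1)$; (b) a ``spread'' profile, in which each agent occupies a private expensive parallel edge of cost $\Theta(1)$ and uses zero-cost side edges to reach $s$ and $t$, is feasible and has social cost $\Theta(n)$. Second, I would verify that this spread profile is a SE by enumerating the relatively few possible forms of coalitional deviation (thanks to symmetry, only the multiset of strategies in the deviation matters) and showing that each candidate deviation is either infeasible, by a capacity violation along the bypass, or fails to strictly improve every coalition member, because the bypass shares its cost too slowly for small coalitions and its capacity is too small to absorb large coalitions. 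Third, I would argue that \emph{every} SE routes all but $O(1)$ agents onto the expensive private edges, by showing that any profile placing $\omega(1)$ agents on the bypass admits a small coalitional deviation — precisely the coalition-improvement mechanism exhibited in Example~1 and Figure~\ref{fig:ext-para-ex} — hence cannot be a SE. Fourth, combining the $\Omega(n)$ lower bound on the cost of every SE with the $O(1)$ bound on $\mathrm{OPT}$ yields $SPoS\ge\Omega(n)$.

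The main obstacle is the simultaneous requirement in Steps 2 and 3 that coalitional instability hold against every cheap profile but \emph{not} against the spread profile. This is exactly the tension that killed SE existence in the two examples of Section~\ref{sec:existence}, so the construction must be balanced on a knife-edge: the bypass needs to be cheap enough to make OPT small and to destabilize cheap profiles, yet its capacity must be small enough (and/or its marginal savings per added user weak enough) that the spread profile resists every coalitional swap. I expect the delicate part of the argument to be the coalition-by-coalition verification in Step~2, where one has to carefully cover subsets of agents jumping from private edges onto the bypass, onto other private edges, or onto mixed targets, and in each case point to either a capacity conflict or an agent in the coalition whose share does not strictly decrease. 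Once that verification goes through, Step~3 follows by a short ``averaging'' style argument on any profile that fails to match the spread pattern.
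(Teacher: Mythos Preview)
Your high-level strategy---build an explicit SP, non-SPP game with a ``spread'' profile that is SE of cost $\Theta(n)$, show every SE has this form, and exhibit an OPT of cost $O(1)$---is exactly what the paper does. Two concrete choices in your plan, however, would derail the construction.

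First, the Braess graph is the canonical obstruction to being SP (this is precisely what is used in the proof of Lemma~\ref{lem:spp-not-embedding}), so a Braess-style backbone would place your game outside the SP class and fail to prove the theorem as stated. The paper instead uses the \emph{other} forbidden configuration from Figure~\ref{fig:non-spp}: its ``lower part'' is essentially the simple EP no-SE gadget of Figure~\ref{fig:ext-para-ex}, placed in parallel with an ``upper part'' that is a slight variant of the SPoA-$n$ SP network from Theorem~\ref{thm:sp-network-spoa-n}. The whole thing stays SP.

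Second, your description of OPT as routing ``all $n$ agents through a cheap shared bypass'' whose ``capacity is too small to absorb large coalitions'' is internally inconsistent: if a cost-$O(1)$ bypass feasibly carries all $n$ agents, the grand coalition simply moves there from the spread profile and everyone improves. The paper's mechanism is different and is really the heart of the construction. OPT is \emph{unequal}: $n-2$ agents sit on zero-cost capacity-limited edges in the upper part and pay $0$, while two agents use the lower gadget at total cost $\tfrac{3}{2}+\varepsilon$, with one of them paying $1+\tfrac{\varepsilon}{2}>1$. The spread SE, by contrast, is perfectly equal---every agent pays exactly $1$. Any coalitional move toward the OPT configuration must include the agent destined to pay $1+\tfrac{\varepsilon}{2}$, and she refuses; conversely, any profile that \emph{does} put two agents in the lower gadget is destabilized by the Figure~\ref{fig:ext-para-ex} coalition move, so it cannot be SE either. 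It is this two-sided use of the no-SE gadget---blocking deviations \emph{from} the spread profile while destroying all cheaper candidates---rather than a capacity shortfall on a bypass, that pins every SE at cost $n$.
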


\begin{proof}
Consider the game of $n$ agents depicted in Figure \ref{fig:SPoS_n_2}. We divide the network into two parts. The first part, which we refer to as the upper part of the network, contains the paths that go from $s$ to $t$ through the nodes $v_1, \ldots, v_{n-2}$. In the upper subgraph, there are $n-1$ edges that cost $0$ and $n-1$ edges that cost $1$. The second part, which we refer to as the lower part of the network, contains the edge connecting $s$ and $t$ directly, and the two paths that go from $s$ to $t$ through node $u$. The upper part of the network can be used by at most $n-1$ agents, so one of the agents must use the lower part of the network.

\begin{figure}
\centering
\includegraphics{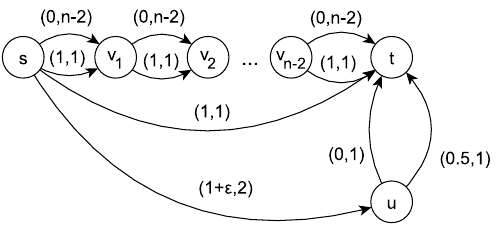}
\caption{\label{fig:SPoS_n_2}A game played on an SP network with $SPoS=\Theta(n)$}
\end{figure}

We show that the only SE in this game are profiles in which each agent pays $1$: $n-1$ agents use the upper part of the network. Each one of them uses a different edge of cost $1$ and $n-2$ edges of cost $0$. The last agent uses the edge that goes directly from $s$ to $t$ and costs $1$. We show that these profiles are SE by considering the possibly profitable deviations.
There are two possible paths that cost less than $1$. The first such path requires the agent to share the edge $(s,u)$ with another agent, and then use the left edge $(u,t)$. However, a deviation of this kind would require another agent to use the path $(s,u,t)$ that goes through the right edge $(u,t)$. That path costs $1 + \frac{\epsilon}{2}$, and the agent using it could not benefit from the deviation.
The other path that costs less than $1$ is the path that goes only through the edges that cost $0$. However, these edges are used to their maximum capacity, and any coalitional deviation that allows an agent to use a path of cost $0$ will not be profitable for one of the participating agents.
Since there are no profitable deviations, the profiles are SE, and their social cost is $n$.

Now, we show that all other profiles are not SE. In every other profile in which only one agent uses the lower subgraph, one of the agents who use the upper part of the network pays at least $2$. This agent can deviate to one of the lower paths and reduce her cost. Therefore, these profiles are not SE. Every profile in which there are two agents using the lower subgraph cannot be a SE either. In such profiles, the agents that use the upper part of the network will only use the edges that cost 0 and will not participate in any deviation. The remaining two agents always have a profitable deviation: In every NE, one of them will use the edge $(s,t)$ that costs 1. However, this agent's cost can be reduced by sharing the edge $(s,u)$ with the other agent, so that one of them pays $\frac{1+\epsilon}{2}$ and the other pays $1+\frac{\epsilon}{2}$ instead of $1+\epsilon$. It remains to prove that the profiles in which three agents use the lower subgraph are not SE, but these are not even NE, as one of the agents can benefit by deviating to a path consisting of edges of cost 0.

So far we have shown that the social cost of every SE is $n$. The optimal profile uses the edges that cost $0$, and the two paths that go through node $u$. The social cost of this profile is $1.5+\epsilon$, hence it holds that $SPoS = \frac{n}{1.5+\epsilon} = \Omega(n)$.
\end{proof}

Next, we show that in the case of non-SP networks, the SPoS can be unbounded (as shown for the SPoA).

\begin{theorem}\label{thm:SPoS-unbounded}
There exists a symmetric CFCSC game with two agents in which the SPoS can be arbitrarily high.
\end{theorem}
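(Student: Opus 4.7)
The plan is to exhibit, for every $R>0$, a symmetric CCS game on a (necessarily non-SP) network whose every SE has cost $\Omega(R)$ while the socially optimal cost is bounded by a constant. Since $SPoS\leq SPoA$ and Theorem~\ref{example-unbounded-spoa} already provides unbounded SPoA, the natural route is to strengthen that construction: modify the game from Figure~\ref{fig:unbounded-SPoA} so that the only profiles of low social cost are destabilized by coalition deviations, leaving a single ``bad'' SE that uses the expensive edge.

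Concretely, I would take essentially the same underlying graph used in Theorem~\ref{example-unbounded-spoa}, with two agents, an outer $s$-$t$ edge of cost $24R$, and inner edges whose costs are chosen so that the cheapest feasible routing has cost $O(1)$. I would then tune the inner costs and capacities so that the ``inner'' subgraph contains a Braess-style gadget (as in Figure~\ref{fig:braess-ex}) relative to which every low-cost profile admits a profitable two-agent coalition deviation of the type shown in that example. The result of these modifications is: (i) the profile in which both agents share the expensive outer edge (or more generally, avoid the inner subnetwork) survives as a SE, and (ii) every profile that routes at least one agent through the inner subnetwork fails to be a SE because the Braess gadget always provides a coalition improvement.

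The steps I would carry out are: first, describe the graph with explicit costs and capacities; second, identify all candidate Nash equilibria of the game and, among the low-cost ones, exhibit for each a coalition $C$ and a deviation $s'_C$ with $p_j(s)>p_j(s'_C,s_{-C})$ for every $j\in C$, mirroring the deviation pattern from Figure~\ref{fig:ext-para-ex}/Figure~\ref{fig:braess-ex}; third, verify that the bad profile (using the $24R$ edge) is a SE by enumerating all coalitions of size $1$ and $2$ and checking that any joint re-routing through the inner subnetwork leaves at least one member weakly worse off, since by construction the inner subnetwork admits no SE on its own; finally, conclude $SPoS\geq\frac{24R+O(1)}{O(1)}$, which is arbitrarily large as $R\to\infty$.

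The main obstacle is the simultaneous requirement (i) and (ii): the Braess-style obstruction must rule out all low-cost SE without also destabilizing the single high-cost profile we want to keep. I expect to address this by placing tight capacities on the inner gadget so that a coalition trying to move from the outer edge into the inner subnetwork must commit to a specific routing that is not jointly improving, while capacities are looser within the inner gadget so that a coalition already inside can always find the Braess-type improvement. Verifying both properties simultaneously is the delicate part of the argument; once the parameters are fixed, the SE checks reduce to a finite and routine case analysis over the (few) candidate profiles of a two-agent game.
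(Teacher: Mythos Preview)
Your verification that the ``bad'' profile is a SE contains a non-sequitur, and in fact the profile you describe cannot be a SE at all. You propose that both agents sit on (or share) the expensive outer edge of cost $24R$, and you justify stability by saying that ``the inner subnetwork admits no SE on its own.'' But the definition of SE asks only whether some coalition has a profitable \emph{deviation}; it does not require the destination to itself be a SE. Concretely: if the optimal profile has total cost $O(1)$, then it is a feasible profile in which each of the two agents pays $O(1)$. If in your bad profile both agents pay $\Omega(R)$ (as they do when sharing a $24R$ edge), the grand coalition simply deviates to the optimal profile and both strictly improve. No capacity trick on the inner gadget can block this, because the optimal profile is feasible by assumption. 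Hence any SE of cost $\Omega(R)$ in a two-agent game must have one agent already paying at most what it pays in the optimum---your plan never arranges for this.

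The paper's construction exploits exactly this observation. In its unique SE the costs are highly \emph{asymmetric}: one agent attains the global minimum individual cost ($0.6$), so it will never join any deviating coalition; that agent's path occupies unit-capacity edges that block every cheap alternative for the second agent, who is thereby forced through the edge of cost $R$. Uniqueness of the SE is then obtained by a short case analysis showing that every other NE admits a two-agent deviation. The essential idea you are missing is not a Braess gadget inside the cheap region, but rather a ``blocking'' agent who is individually optimal and whose presence, via capacities, eliminates all cheap paths for the remaining agent.
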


\begin{figure}
\centering
\includegraphics{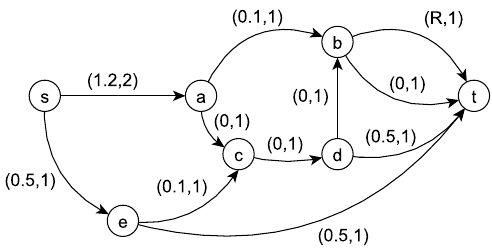}
\caption{\label{fig:unbounded-SPoS2}A game with unbounded SPoS}
\end{figure}

\begin{proof}
Consider the game of two agents presented in Figure \ref{fig:unbounded-SPoS2}. The following strategy profile is a SE: One agent uses the path $(s,e,c,d,b,t)$ and incurs the cost $0.6$, and the other agent uses the path $(s,a,b,t)$ and incurs the cost $R+1.3$. This is a SE since the first agent incurs the minimal possible cost, and the other agent uses the only available path. We claim that this is the only SE.

In every NE, it is impossible that two agents will use the edge $(s,a)$, as one of them incurs a cost of at least $1.1$, and therefore may deviate to the path $(s,e,t)$. Hence, one agent must use the edge $(s,a)$ and the other must use the edge $(s,e)$.

If one agent uses the path $(s,e,t)$, using the path $(s,a,c,d,b,t)$ minimizes the cost of the other agent. This strategy profile is not a SE, as they can jointly deviate to the paths $(s,a,b,t)$ and $(s,a,c,d,t)$, respectively, and both agents will benefit from the deviation.

We obtain that in every SE one agent must use the path $(s,e,c,d)$. This agent will not use the path $(s,e,c,d,t)$, as he can use the better path $(s,e,t)$. The agent will not use the edge that costs $R$ for the same reason.

We conclude that the only SE is the one presented above. This SE uses the edge that costs $R$. An optimal strategy profile will avoid that edge, as $R$ can be arbitrarily high. Therefore, the SPoS is unbounded, as stated.
\end{proof}

\section{Extensions}
\subsection{Asymmetric Games}
A natural extension of our model is the case of asymmetric games, where different agents are associated with different source and sink nodes.
In this section we remark on the existence of SE and their quality in asymmetric games. Note that we still consider games that are played on networks with designated source and sink nodes (as defined earlier), but the source and sink nodes of network do not necessarily serve as the source and sink of the individual agents.

Regarding the existence of SE, Epstein et al.\ \cite{CostSharingSE} provided an example of an asymmetric game played on an SPP network that does not admit a SE. The following two theorems provide a characterization of the networks that admit a SE: one theorem considers games with a single source and the other considers games with multiple source nodes. The proofs are deferred to the appendix.

\begin{theorem}\label{thm:asymmetric-se-existence}
In single-source games, a symmetric network $G$ admits a SE if and only if it is SPP.
\end{theorem}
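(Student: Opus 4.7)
I would split the characterization into its two directions. The $(\Rightarrow)$ direction is immediate from Theorem~\ref{thm:no-se-on-non-spp}: if $G$ is not SPP then there is already a symmetric CCS game on $G$ without a SE, and every symmetric game is in particular a single-source game, so the same construction serves as the required witness.

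For the $(\Leftarrow)$ direction, assume $G = G_1 \rightarrow \cdots \rightarrow G_k$ is SPP and fix a single-source CCS game on $G$. My plan is to extend Algorithm~\ref{compute-se-alg} to cope with heterogeneous sinks by handling each subnetwork $G_j$ in turn. For each $G_j$ I would partition the set $N_j$ of agents whose source-to-sink path enters $G_j$ into \emph{transit} agents (whose sinks lie at $t_j$ or in some later subnetwork, so they must traverse $G_j$ along an entire parallel path) and \emph{internal} agents (whose sinks lie strictly inside $G_j$ on a specific simple path $P$, forcing them to use the prefix of $P$ up to their sink). I would first pin every internal agent to its forced prefix, subtract their usage from each edge's capacity to obtain a residual parallel-paths instance, and then invoke Algorithm~\ref{compute-se-alg} on that residual instance to assign transit agents greedily to the cheapest available path in $G_j$. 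Concatenating the per-$G_j$ choices yields a global profile $s$.

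To show that $s$ is a SE I would adapt the argument of Theorem~\ref{thm:spp-se-existence}: sort agents within each $G_j$ by the cost they pay in $G_j$, and suppose for contradiction that a coalition $C$ has a profitable joint deviation $s'$. Since consecutive $G_j$ are separated by cut-vertices, the deviation decomposes into independent per-layer deviations, and inside any single layer the cheapest-edge argument of Theorem~\ref{thm:spp-se-existence} forbids the lowest-sorted member of $C \cap N_j$ from strictly improving within $G_j$. The main obstacle is the genuinely asymmetric case: because different agents use different subsets of the $G_j$, a member of $C$ can lose in one layer and gain in another, so no single layer on its own certifies non-improvement for the whole of $C$. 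I expect to close this gap by choosing a lexicographic global ordering that is compatible with each per-layer greedy order simultaneously, and then applying the cheapest-edge contradiction to the lexicographically smallest member of $C$, who sits at the bottom of every layer she traverses; any net-profitable deviation would then expose a violated greedy choice in some layer. The delicate bookkeeping around the forced internal agents, ensuring that their fixed placements interlock cleanly with the residual greedy in every layer, is where I expect the bulk of the technical work in the appendix to lie.
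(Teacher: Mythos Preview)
Your plan matches the paper's proof in outline: the $(\Rightarrow)$ direction is exactly Theorem~\ref{thm:no-se-on-non-spp}, and for $(\Leftarrow)$ the paper also pins the internal agents to their forced prefixes and runs Algorithm~\ref{compute-se-alg} per layer on the remaining (transit) agents. The one place where you leave a genuine gap is the global ordering, and the paper resolves it more simply than you expect.

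You propose to run the per-layer greedy first and then search for a global lexicographic ordering compatible with all the per-layer cost orders. The paper does the opposite: it fixes a single global ordering \emph{before} assigning anything, and the key structural fact that makes this work in single-source games is that the set of layers an agent traverses is always a prefix $G_1,\ldots,G_i$ determined by the location of her sink. Ordering the agents once by decreasing sink depth (and, within a layer $G_i$, placing internal agents after transit agents) therefore guarantees that the transit agents of every layer $G_i$ form an initial segment $\{1,\ldots,m_i\}$ of the global order. One then runs Algorithm~\ref{compute-se-alg} in each layer handing out paths in this fixed order. For the minimal-index member $j$ of a deviating coalition $C$, $j$ is then automatically minimal in every layer she uses, so the per-layer argument of Theorem~\ref{thm:spp-se-existence} shows she cannot strictly improve in any single layer, hence not overall; the cross-layer compensation you worry about never arises for this particular agent, and no lexicographic construction is needed. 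The internal-agent case is also handled by this ordering: only transit agents of $G_i$ could join an internal agent's forced path, and all of them have smaller index, hence lie outside $C$.
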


\begin{theorem}\label{thm:multiple-source-existence}
In games with multiple source and sink nodes, a symmetric network $G$ admits a SE if and only if there are networks $G_1,\ldots,G_k$ such that $G=G_1 \rightarrow G_2 \rightarrow \ldots \rightarrow G_k$, and that one of $G_1,\ldots,G_k$ is a network of parallel paths and the rest are networks of parallel edges.
\end{theorem}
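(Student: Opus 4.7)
The plan is to prove both directions, mirroring the structure used for Theorem~\ref{thm:asymmetric-se-existence}.

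For the \emph{if} direction, let $G = G_1 \to \dots \to G_k$ where $G_j$ is a parallel-paths network and each other $G_i$ is a network of parallel edges. The key observation is that in a parallel-edges component $G_i$ (which has only two vertices), any agent whose path uses $G_i$ must cross it by a single edge from the left cut-point to the right cut-point; in particular, any source or sink lying inside $G_i$ sits at one of these two cut-points, so inside $G_i$ all such agents are effectively symmetric. I would therefore partition the agents by which components their paths must traverse (determined by their source/sink positions along the series chain) and, for each parallel-edges block, run the greedy fractional-cost algorithm of Theorem~\ref{thm:spp-se-existence} to assign edges. For the distinguished component $G_j$ the induced subproblem is an asymmetric single-source SPP game, and the SE construction of Theorem~\ref{thm:asymmetric-se-existence} handles it. Concatenating per-block strategies yields a profile $s$; the SE property follows by a block-by-block argument, since any profitable coalitional deviation must strictly help some member inside some $G_i$, contradicting either the greedy choice (in a parallel-edges block) or Theorem~\ref{thm:asymmetric-se-existence} (in $G_j$).

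For the \emph{only if} direction, suppose $G$ is not of the required form. \textbf{Case 1:} $G$ is not SPP. Then by Theorem~\ref{thm:no-se-on-non-spp} there is already a \emph{symmetric} single-source CCS game on $G$ with no SE, and symmetric games are a special case of multi-source games, so we are done. \textbf{Case 2:} $G$ is SPP, so $G = G_1 \to \dots \to G_k$ with each $G_i$ a parallel-paths network, but at least two of them --- say $G_a$ and $G_b$ --- contain a path of length strictly greater than one. I would pick an internal node of a non-trivial path in each of $G_a, G_b$ and use these as the source of one agent and the sink of a second agent, reproducing the Epstein et al.\ no-SE construction referenced at the start of Section~6.1. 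The remaining components $G_i$ and the unused parallel paths inside $G_a, G_b$ would be neutralised via the subdivision/addition/extension-style emulation of Theorem~\ref{thm:no-se-on-non-spp}: assign $0$ costs and $0$-or-$n$ capacities so that the ambient game behaves exactly as the minimal two-agent counterexample.

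The step I expect to be the main obstacle is Case~2 of the only-if direction: it requires a clean gadget that exploits two non-trivial parallel-paths components simultaneously, together with a verification that no alternative strategies created by the other $G_i$ or by the extra paths inside $G_a, G_b$ introduce new profitable unilateral moves that would restore a SE. I would handle this by ensuring, through the capacity choices, that the relevant agents have essentially only the two intended strategies available to them, reducing the analysis to the two-agent no-SE example, and then formalising the reduction by the same bijection-of-strategy-spaces argument used in the proof of Theorem~\ref{thm:no-se-on-non-spp}.
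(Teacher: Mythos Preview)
Your \emph{only if} direction matches the paper: non-SPP is handled by the symmetric counterexample, and for an SPP network with two non-trivial parallel-paths blocks the paper embeds one of two fixed minimal networks (the Epstein et al.\ example and a one-edge extension of it) and applies the emulation argument of Theorem~\ref{thm:no-se-on-non-spp}, exactly as you sketch.

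In the \emph{if} direction there is a genuine gap. The induced problem on the distinguished block $G_j$ is \emph{not} a single-source game: agents can have their source at an inner node of $G_j$ (and sink to the right of $G_j$) while other agents have their sink at an inner node of $G_j$ (and source to the left). Thus Theorem~\ref{thm:asymmetric-se-existence} does not apply to $G_j$ as stated. What is true --- and what the paper exploits --- is that every agent whose source or sink is an inner node of $G_j$ has a \emph{forced} path inside $G_j$, so only the full-traversal agents have any freedom there. The paper therefore does not reduce to Theorem~\ref{thm:asymmetric-se-existence}; it repeats the direct construction: fix the forced segments first, then run Algorithm~\ref{compute-se-alg} in every block for the remaining agents, assigning paths according to a single \emph{global} ordering in which the inner-node agents of $G_j$ come last.

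This global ordering is also what makes your ``block-by-block'' SE argument go through, and you should make it explicit. Concatenating profiles that are SE in each block separately does not automatically yield a global SE: the standard argument takes the globally minimum-indexed member $j$ of a deviating coalition and shows that in whichever block $j$ would have to improve, either $j$'s edge is at capacity or every higher-indexed agent traversing that block is already on $j$'s edge. For this to work the same ordering must be used in every block, and in $G_j$ the forced agents must be indexed above the free ones (otherwise a forced agent could recruit lower-indexed free agents onto its path). Once you make these two corrections --- drop the appeal to Theorem~\ref{thm:asymmetric-se-existence} in favor of the direct construction, and impose the global ordering --- your argument coincides with the paper's.
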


Next we study the SPoA in asymmetric games.
Unlike the symmetric case, which exhibited reasonable bounds for some families of topologies, in the asymmetric case, even if all agents share the same source (but not the same sink) and play a game on a simple EP network, the SPoA may be unbounded.

\noindent{\em Example.}
Consider the game played by two agents that is presented in Figure \ref{fig:asymmetric-SP-unbounded-SPoA2}. The source node of both agents is $s$, and the sink nodes are $t_1$ and $t_2$, respectively. The strategy profile in which agent 1 uses the edge of cost $R$ and agent 2 uses the path of cost $0$ is a SE. In the optimal profile, agent 1 uses the edge of cost $0$, and agent 2 uses the edge of cost $1$. Therefore, in this example $SPoA = R$, which can be arbitrarily high.

\begin{figure}
\begin{center}
\includegraphics{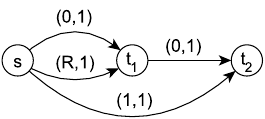}
\caption{\label{fig:asymmetric-SP-unbounded-SPoA2}An asymmetric game with unbounded SPoA}
\end{center}
\end{figure}

However, for the class of SPP networks, the SPoA is upper bounded by $H_n$, even for asymmetric games.

\begin{theorem}\label{thm:asymmetric-spp-spoa}
For every CFCSC game (either symmetric or asymmetric) played on an SPP network, it holds that $SPoA \leq H_n$ (if a SE exists).
\end{theorem}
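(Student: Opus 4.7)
The plan is to reduce the theorem to the symmetric case already handled by Theorem~\ref{series-comp-spoa}, by exploiting the rigid structure of SPP networks to peel off the asymmetry. First, I would write the SPP network as the series composition $G = G_1 \rightarrow G_2 \rightarrow \ldots \rightarrow G_k$, where each $G_i$ is a parallel composition of simple paths. As in the proof of Theorem~\ref{series-comp-spoa}, the social cost is additive over the blocks, $cost_G(s) = \sum_{i=1}^k cost_{G_i}(s)$, and the restriction of any SE in $G$ to a block $G_i$ is a SE in the induced (possibly asymmetric) game on $G_i$, since any block-internal coalitional deviation in the full game witnesses an identical deviation in the induced game. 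This yields $SPoA(G) \leq \max_i SPoA(G_i)$, reducing matters to a single parallel-paths block.

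Within a parallel-paths block, I would partition the participating agents into two classes. A \emph{spanning} agent has her induced source and sink at the two boundary vertices of the block and is free to choose any of the parallel paths. A \emph{fixed} agent has at least one induced endpoint in the interior of some path $P$, so her strategy within the block is uniquely determined as the sub-path of $P$ between her induced source and sink. Since fixed agents have no alternative strategies, any profitable coalitional deviation may be attributed solely to spanning agents. The spanning agents thus play a symmetric subgame on the parallel-paths block, with edge capacities effectively reduced by the deterministic load the fixed agents impose on each edge.

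I would then apply Lemma~\ref{feasible-combined-profile-ep} to this spanning-agent symmetric subgame, obtaining an optimal spanning profile $s^*_{\text{span}}$ such that $(s^*_{\text{span},C}, s_{-C})$ is feasible for every coalition $C$ of spanning agents. Augmenting $s^*_{\text{span}}$ with the forced fixed-agent strategies yields a feasible profile $s^*$ for the full block, and the feasibility of $(s^*_C, s_{-C})$ in the full game follows since fixed agents play identically in $s$ and $s^*$. With feasibility in hand, the Epstein et al.\ telescoping argument over coalitions of decreasing size produces the inequality $cost(s) \leq H_n \cdot cost(s^*)$, which combined with the series decomposition gives the bound.

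The main obstacle is verifying that the spanning-agent subgame genuinely captures both the equilibrium and the optimization structure of the full asymmetric game, despite the fair cost-sharing coupling between fixed and spanning agents on shared edges. The key observation is that the social cost $\sum_{e \,:\, x_e(s) > 0} p_e$ depends only on which edges are used, not on how the users are split between fixed and spanning agents; hence the full social optimum decomposes as a constant fixed-agent contribution plus the optimum of the spanning-agent subgame (with reduced capacities). This compatibility is what ensures that the $H_n$ bound established on the symmetric subgame transfers intact to the full asymmetric game, and I expect the remaining bookkeeping (tracking exactly which edges see fixed-agent load) to be routine.
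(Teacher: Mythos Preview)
Your series decomposition into parallel-path blocks is sound and mirrors the paper's block-by-block treatment. The gap is in the single-block step: the claim that ``the full social optimum decomposes as a constant fixed-agent contribution plus the optimum of the spanning-agent subgame (with reduced capacities)'' is false. Edges already used by fixed agents are paid for regardless of what the spanning agents do, so the \emph{full} optimum prefers routing spanning agents over those edges, whereas your subgame --- which still charges the original $p_e$ there --- does not. Concretely, take two parallel paths $P_1=(e_1,e_2)$ with costs $(6,4)$ and $P_2$ a single edge of cost $5$, one fixed agent forced onto $e_1$ alone, and one spanning agent. The full optimum sends the spanning agent along $P_1$ (total cost $10$), but your subgame optimum sends her along $P_2$, so your $s^*$ has full cost $11$. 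The Epstein et al.\ telescoping then yields only $cost(s)\le H_n\cdot 11$, which does not bound $cost(s)/10$ by $H_n$.

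The paper avoids this by never passing to a spanning-agent subgame. It fixes $G_{OPT}$ from a \emph{full-game} optimum and builds $s^*$ directly: within each block, every parallel path lying in both $s$ and $G_{OPT}$ is assigned to exactly the agents using it in $s$ (this automatically covers all fixed agents, whose forced paths necessarily lie in $G_{OPT}$), and the remaining spanning agents are placed on leftover $G_{OPT}$-paths. Because the parallel paths of a block are edge-disjoint, this gives $M\subseteq M^*$ on every edge used in both profiles, and feasibility of $(s^*_C,s_{-C})$ follows at once --- no recursion through Lemma~\ref{feasible-combined-profile-ep} is needed. Your argument is repaired by the same move: either construct $s^*$ from the full-block $G_{OPT}$ rather than the subgame optimum, or equivalently zero out the costs on the fixed agents' edges before computing the spanning subgame optimum. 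The feasibility portion of Lemma~\ref{feasible-combined-profile-ep} uses only that $s$ is feasible, not that it is a SE, so its construction still applies after this adjustment.
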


The proof is deferred to the appendix.

\subsection{Undirected Graphs}
Up until now we considered the case of cost-sharing games played on a directed graphs.
In this section we show that our results (for symmetric games) extend to undirected graphs.

For directed graphs we have shown that a network admits a SE if and only if it is an SPP network.
Theorem~\ref{thm:spp-se-existence} asserts that every SPP network admits a SE.
To prove existence we use a greedy algorithm for assigning edges to agents, such that in every coalition, there must exist an agent that cannot reduce his cost in any subnetwork of parallel edges.
The same algorithm can be used to compute a SE in the undirected case (due to exactly the same arguments).
In the converse direction, we presented two examples of games played on directed graphs that admit no SE.
These games admit no SE also in the case where the underlying graph is undirected.
To complete the argument we note that the lemma that is used to extend these forbidden networks to networks that embed them applies to undirected graphs following an identical analysis.\footnote{In the undirected case, we use the characterization of SP networks shown by Milchtaich~\cite{milchtaich2006network}.}

For bounding the SPoA and the SPoS in EP and SP networks, we use the following claim.

\begin{claim}
Let $\Delta$ be a symmetric CFCSC game played on a directed SP network $G$, and let $\Delta'$ be the same game played on the same network but with undirected edges. If $\bm{s}$ is a SE in $\Delta'$, then $\Delta$ has a SE with the same social cost as in $\bm{s}$.
\end{claim}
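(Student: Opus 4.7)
The plan is to prove the stronger statement that the profile $s$ itself is a SE in $\Delta$, with its original social cost. The key is a purely topological observation: in any directed SP network $G$, the set of simple $s$-$t$ paths in the underlying undirected graph $\tilde G$ coincides, as a family of edge sets, with the set of directed $s$-$t$ paths in $G$. Once this is established, $\Delta$ and $\Delta'$ have literally the same strategy space $\Sigma_i$, so feasibility, the usage counts $x_e(\cdot)$, the per-agent costs $p_i(\cdot)$, the social cost, and the collection of deviations available to every coalition all agree between the two games; the SE property for $s$ then transfers from $\Delta'$ to $\Delta$ with identical social cost.

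I would prove the structural observation by induction on the series-parallel decomposition of $G$. The base case of a single edge is trivial. For a series composition $G = G_1 \rightarrow G_2$ identified at a node $v$, note that $v$ is a cut vertex in $\tilde G$ separating $s$ from $t$, so any simple undirected $s$-$t$ path visits $v$ exactly once and decomposes uniquely into a simple $s$-$v$ path inside $\tilde G_1$ followed by a simple $v$-$t$ path inside $\tilde G_2$; the inductive hypothesis applied to each piece shows both are forward-directed. For a parallel composition $G = G_1 \parallel G_2$, the only vertices shared by $\tilde G_1$ and $\tilde G_2$ are $s$ and $t$; a simple path leaving $s$ into $\tilde G_1$ cannot switch to $\tilde G_2$ without revisiting $s$ (forbidden by simplicity) or terminating at $t$, so the whole path lies in one $\tilde G_i$ and induction applies.

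Given the structural observation, the claim itself requires no further construction: $s$ is already a feasible profile in $\Delta$, its cost in $\Delta$ equals its cost in $\Delta'$, and a profitable coalitional deviation against $s$ in $\Delta$ would also be a profitable coalitional deviation in $\Delta'$, contradicting the SE assumption. The only genuine obstacle is the structural lemma, and the crux there is the cut-vertex argument in the series case together with the shared-endpoints argument in the parallel case; both are straightforward because simple paths cannot revisit vertices, but one must be careful not to treat strategies as walks, for which the statement would genuinely fail.
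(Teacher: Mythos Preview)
Your approach is essentially the paper's: both show that the undirected SE $s$ is itself (after a harmless normalization) a valid profile in the directed game, and then observe that every directed deviation is also an undirected deviation, so SE transfers. The paper obtains the structural fact that every acyclic undirected $s$--$t$ path in an SP network respects the edge directions by invoking a lemma of Milchtaich on the vertex order along acyclic paths; you prove the same fact directly by induction on the SP decomposition, using the cut-vertex argument for series and the shared-endpoints argument for parallel. Your inductive proof is self-contained and correct, and arguably cleaner than importing Milchtaich's result.

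One small gap: the paper does not assume strategies in $\Delta'$ are simple paths. Its proof first argues that in any SE of $\Delta'$ one may delete zero-cost cycles and that positive-cost cycles cannot occur (the agent could drop the cycle and strictly improve), thereby reducing $s$ to a profile of acyclic paths with the same cost, and only then applies the structural fact. You skip this preprocessing and build your argument on the identification of \emph{simple} undirected paths with directed paths; your closing remark that ``the statement would genuinely fail'' for walks is slightly off --- the claim itself does not fail, only your identification of strategy spaces does. The fix is exactly the paper's two-line cycle-removal step, after which your induction does all the work.
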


\begin{proof}
Let $\bm{s}$ be a SE in $\Delta'$.
We first claim that we can assume, without loss of generality, that $\bm{s}$ uses only acyclic paths.
To see this note that the paths in $\bm{s}$ cannot contain cycles that cost more than $0$, as if one of them did, the agent using that path could reduce her cost by removing the cycle.
In addition, it can be assumed that the paths in $\bm{s}$ do not contain cycles that cost $0$, as removing these cycles does not affect the agent costs.
To conclude the proof, we use the following lemma, established by Milchtaich \cite{milchtaich2006network}.

\begin{lemma}
\cite{milchtaich2006network} Let $G$ be an undirected SP network, and let $u,v$ be two distinct vertices in $G$. If there exists an acyclic $s$-$t$ path in which $u$ precedes $v$, then $u$ precedes $v$ in every acyclic $s$-$t$ path that contains both vertices.
\end{lemma}

By the above lemma, it follows that every edge used in the strategy profile $\bm{s}$ (in the game $\Delta'$) is used in the direction determined by $G$. Each edge $(u,v)$ in the directed SP network $G$ can be used as part of an acyclic path in $G$, and the lemma guarantees that if $u,v$ are both reached in a path in $\bm{s}$, $u$ precedes $v$. Therefore, $\bm{s}$ is a valid strategy profile in $\Delta$. Clearly, any deviation that is feasible in $\Delta$ is also feasible in $\Delta'$; therefore the profile $\bm{s}$ is a SE in $\Delta$ as well.
\end{proof}

The above claim implies that the set of SE in the undirected case is a subset of the set of SE in the directed case (up to cycles of cost 0). Therefore, upper bounds on the SPoA in games played on directed SP networks carry over to the undirected case.
In addition, the examples that are used to show that these upper bounds are tight (or that the SPoA can be unbounded in general networks) hold when the edges are undirected as well.
In the case of games with homogeneous capacities that are played on EP and SPP networks, we have shown that every SE is optimal, and this result carries over to the undirected case (by the above claim).

The upper bounds on the SPoS are derived from the fact that $SPoS \leq SPoA$, and we have shown that these bounds are asymptotically tight. The examples that are used to show that these upper bounds are tight still hold when the edges are undirected (the analysis of the game with unbounded SPoS appears in the appendix). This completes the extension of our results for symmetric games to the case of undirected graphs.

\section*{Acknowledgments}
Idan Nurick was involved in a preliminary version of this work. We would like to thank Nick Gravin for helpful discussions.

\bibliography{se_capacitated_teac}
\bibliographystyle{plain}

\appendix

\section{Omitted Proofs\label{appendix-proofs}}
\begin{proof}[Proof of Theorem~\ref{thm:asymmetric-se-existence}]
One direction has already been proven in Section~\ref{sec:existence}: For every non-SPP network, we have shown that there is a symmetric CFCSC game played on it that does not admit a SE. We now show that every single source game played on an SPP network admits a SE.

Let $\Delta$ be a single source game played on an SPP network $G$. We denote the common source node by $s$ and the sink node of agent $j$ by $t_j$. By the definition of SPP networks, $G$ is series composition of networks of parallel paths. Let $G_1, \ldots, G_k$ be networks of parallel paths such that $G=G_1 \rightarrow G_2 \rightarrow \ldots \rightarrow G_k$. For now, we assume that $s$ is the source node of $G_1$.

Consider the sink node of agent $j$. That node must be either an inner node or a sink node of one of the subnetworks $G_1, \ldots, G_k$. Assume that $t_j$ is an inner node or the sink of subnetwork $G_i$. Now let us look at a path from $s$ to $t_j$. Since the graph is directed, any path from $s$ to $t_j$ does not use edges from subnetworks $G_{i+1}, \ldots, G_k$. In addition, any path from $s$ to $t_j$ can be divided into $i$ subsequent parts. The first $i-1$ parts are paths from the source to the sink of each of the subnetworks $G_1, \ldots, G_{i-1}$. The last part is a path from the source node of $G_i$ to $t_j$. If $t_j$ is an inner node of $G_i$, there is only one path from the source of $G_i$ to $t_j$.

We have seen that the sink node of an agent determines whether edges from a subnetwork $G_i$ can be part of her path. Without loss of generality, we assume that the agents are ordered in a way that agents who use more subnetworks of $G_1, \ldots, G_k$ come first. Formally, for every two agents $j_1, j_2$, such that $t_{j_1}$ is an inner node or the sink node of $G_{i_1}$ and $t_{j_2}$ is an inner node or the sink node of $G_{i_2}$, if $j_1 < j_2$, then: (i) $i_1 \geq i_2$, (ii) if $t_{j_1}$ is an inner node and $t_{j_2}$ is not an inner node, then $i_1 > i_2$.

Now we define a strategy profile $\bm{r}$ and show that it is a SE. The definition of the profile uses Algorithm~\ref{compute-se-alg} in a similar way to the proof of Theorem~\ref{thm:spp-se-existence}. The profile $\bm{r}$ is defined in two steps. First, if agent $j$'s sink node is an inner node of subnetwork $G_i$, every path from $s$ to $t_j$ uses the same edges in $G_i$, so we assign these edges to the agent. This is only a part of agent $j$'s path, and the rest of her path is determined in the next step. In the next step, we iterate through the subnetworks $G_1, \ldots, G_k$. When subnetwork $G_i$ is considered, we determine how many agents need to be assigned a path from the source node to the sink node of $G_i$, and compute these paths using Algorithm~\ref{compute-se-alg}. The algorithm is stated for edges, but as in the proof of Theorem~\ref{thm:spp-se-existence}, each path can be replaced by an equivalent edge that sums the costs of the edges in the path and has the minimal available capacity of these edges. Then, we assign each of the paths returned by the algorithm to the agents according to their order: The first agent $j_1$ who needs to be assigned a path in $G_i$ gets the lowest-cost path returned by the algorithm. Then, the second agent $j_2 > j_1$ who needs a path in $G_i$ is assigned the second lowest-cost path, and we continue the same way until all the paths returned by the algorithm are assigned. This completes the definition of $\bm{r}$.

We claim that the strategy profile $\bm{r}$ is a SE. Assume by contradiction that there is a coalition $C$ for which deviating to another profile is profitable, and let $j$ be the minimal index of an agent in $C$. Agent $j$ must reduce the cost that she pays for the edges in at least one of the subnetworks $G_1, \ldots, G_k$.

For each $G_i$, agent $j$ can reduce her cost in $G_i$ by moving to another path, or by sharing the cost of her path with another agent that deviates to that path. As in the proof of Theorem~\ref{thm:spp-se-existence}, agent $j$ cannot use another path in a profitable deviation, as if there were such a path, the algorithm would have assigned that path to $j$ earlier. So after the deviation, agent $j$ must share her path in $G_i$ with more agents compared to the original profile $\bm{r}$. There are two cases:

\begin{enumerate}
\item Agent $j$'s path reaches the sink node of $G_i$. In that case, either agent $j$'s path in $G_i$ has an edge $e$ that is used by $c_e$ agents, or the path is used by agents $j, \ldots, n$. If the path contains an edge $e$ that is used by $c_e$ agents in $\bm{r}$, the path cannot be used by more than $c_e$ agents after the deviation. If the path is already used by agents $j, \ldots, n$, no other agents can join the path, since $j = \min{C}$. In both cases, the cost of agent $j$ in $G_i$ does not decrease.
\item Agent $j$'s sink node is an inner node of $G_i$. In that case, the only agents who can join agent $j$'s path in $G_i$ are agents who reach the sink node of $G_i$. However, the agents are ordered so that all the agents who reach the sink node of $G_i$ are indexed lower than $j$. These agents do not participate in the deviation, and we conclude that agent $j$ cannot reduce the cost of her path in $G_i$.
\end{enumerate}

We have shown that the cost of agent $j$ cannot decrease after the deviation, which contradicts the assumption that the deviation is profitable. Therefore, $\bm{r}$ is a SE.

In the proof, we assumed that the source node $s$ is the source node of $G_1$. If $s$ is the source node of another subnetwork $G_i$, the subnetworks $G_1, \ldots, G_{i-1}$ are never used by any agent and can be ignored. If $s$ is an inner node of subnetwork $G_i$, all the possible paths of all agents use the same edges in $G_i$, and no agent can reduce the cost of her path in $G_i$.
\end{proof}

\begin{proof}[Proof of Theorem~\ref{thm:multiple-source-existence}]
We first prove that the networks described in the theorem admit a SE. The proof is very similar to that of Theorem~\ref{thm:asymmetric-se-existence}. Assume that $G=G_1 \rightarrow G_2 \rightarrow \ldots \rightarrow G_k$, where $G_l$ is a network of parallel paths and $G_1,\ldots,G_{l-1},G_{l+1},\ldots,G_k$ are networks of parallel edges. Let $\Delta$ be a multi-source game played on $G$.

As in the proof of Theorem~\ref{thm:asymmetric-se-existence}, the source and sink nodes of each agent define which of the subnetworks $G_1,\ldots,G_k$ are used as part of her path. If the source or sink node of an agent is an inner node of $G_l$, there is only one path in $G_l$ that reaches that node and the agent must use it.

In this proof, we assume that the agents are ordered in a way that agents whose source or sink node is an inner node of $G_l$ come after all other agents. Formally, if $j_1$ is an agent that both her source and sink nodes are not inner nodes of $G_l$, and $j_2$ is an agent whose source or sink node is an inner node of $G_l$, then $j_1 < j_2$.

We now define a strategy profile $\bm{r}$ and show that it is a SE. The profile $\bm{r}$ is defined in two steps. First, if agent $j$'s source or sink node is an inner node of subnetwork $G_l$, every possible path for that agent uses the same edges in $G_l$, so we assign these edges to the agent. This is only a part of agent $j$'s path, and the rest of her path is determined in the next step. In the next step, we iterate through the subnetworks $G_1, \ldots, G_k$ (including $G_l$). When subnetwork $G_i$ is considered, we determine how many agents need to be assigned a path from the source node to the sink node of $G_i$, and compute these paths using Algorithm~\ref{compute-se-alg}. Then, we assign each of the paths returned by the algorithm to the agents according to their order: The first agent $j_1$ who needs to be assigned a path in $G_i$ gets the lowest-cost path returned by the algorithm. Then, the second agent $j_2 > j_1$ who needs a path in $G_i$ is assigned the second lowest-cost path, and so on. This completes the definition of $\bm{r}$.

We claim that the strategy profile $\bm{r}$ is a SE. Assume by contradiction that there is a coalition $C$ that benefits from deviating to another profile, and let $j$ be the minimal index of an agent in $C$.

Agent $j$ can reduce her cost in $G_i$ by moving to another path, or by sharing the cost of her path with another agent that deviates to that path. As in the proof of Theorem~\ref{thm:spp-se-existence}, agent $j$ cannot use another path in a profitable deviation, as if there were such a path, the algorithm would have assigned that path to $j$ earlier. So after the deviation, agent $j$ must share her path in $G_i$ with more agents compared to the original profile $\bm{r}$. There are two cases:

\begin{enumerate}
\item Agent $j$'s path contains both the source and sink nodes of $G_i$. In that case, either agent $j$'s path in $G_i$ has an edge $e$ that is used by $c_e$ agents, or the path is used by agents $j, \ldots, n$. If the path contains an edge $e$ that is used by $c_e$ agents in $\bm{r}$, the path cannot be used by more than $c_e$ agents after the deviation. If the path is already used by agents $j, \ldots, n$, no other agents can join the path, since $j = \min{C}$. In both cases, the cost of agent $j$ in $G_i$ does not decrease.
\item $i = l$ and agent $j$'s source or sink node is an inner node of $G_l$. In that case, the only agents who can join agent $j$'s path in $G_l$ are agents whose path contains both the source and sink nodes of $G_l$. However, the agents are ordered so that these agents are indexed lower than $j$. These agents do not participate in the deviation, and we conclude that agent $j$ cannot reduce the cost of her path in $G_l$.
\end{enumerate}

We have shown that the cost of agent $j$ cannot decrease after the deviation, which contradicts the assumption that the deviation is profitable. Therefore, $\bm{r}$ is a SE.

\begin{figure}
\centering
\includegraphics{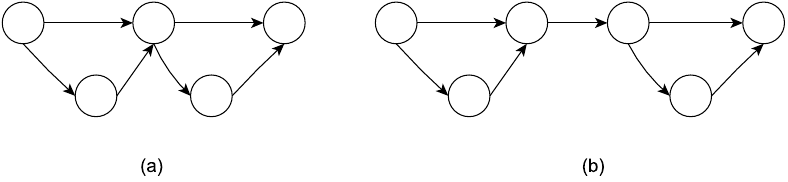}
\caption{\label{fig:SPP_No_SE}SPP networks that do not admit a SE in games with multiple source and sink nodes}
\end{figure}

The converse direction relies on previous proofs. For every non-SPP network, we have already shown that there is a symmetric CFCSC game played on it that has no SE. Epstein et al.\ \cite{CostSharingSE} have provided an example of a game with multiple source and sink nodes that is played on the network depicted in Figure~\ref{fig:SPP_No_SE}(a) and does not admit a SE. An equivalent game can be played on the network presented in Figure~\ref{fig:SPP_No_SE}(b), by setting the cost of the additional edge to be $0$. It is easy to show that one of the networks of Figure~\ref{fig:SPP_No_SE} is embedded in every SPP network $G=G_1 \rightarrow G_2 \rightarrow \ldots \rightarrow G_k$ that at least two of its subnetworks $G_1, \ldots, G_k$ are not networks of parallel edges. By the arguments used in the proof of Theorem~\ref{thm:no-se-on-non-spp}, it follows that these networks do not admit a SE.
\end{proof}

\begin{proof}[Proof of Theorem~\ref{thm:asymmetric-spp-spoa}]
Let $\Delta$ be a game played on an SPP network $G$ such that $G=G_1 \rightarrow G_2 \rightarrow \ldots \rightarrow G_k$, and let $\bm{s}$ be a SE in $\Delta$. As in the proof of Theorem~\ref{thm:ep-spoa}, it is sufficient to show that there is an optimal strategy profile $\bm{s}^*$, such that for every coalition of agents $C$, the profile $(\bm{s}_C,\bm{s}^*_{-C})$ is feasible.

Let $G_{OPT}$ be the subnetwork of $G$ that is used by an optimal solution. We define a strategy profile $\bm{s}^*$ that uses edges only from $G_{OPT}$. The source and sink nodes of each agent define which of the subnetworks $G_1,\ldots,G_k$ are used as part of her path. We iterate through $G_1, \ldots, G_k$, and for every agent who uses edges from $G_i$, we define the path in $G_i$ that this agent uses in $\bm{s}^*$.

First, for each of the parallel paths in $G_i$ that are used in both $\bm{s}$ and $G_{OPT}$, we assign each path to the same agents who use it in $\bm{s}$. This includes all the agents whose source or sink node is an inner node of $G_i$, who must use a specific path in $G_i$ in every strategy profile. This does not violate the capacity constraints due to the feasibility of $\bm{s}$. Then, for the other agents who use $G_i$ and still do not have a path, we assign them any available path in $G_i$ that is part of $G_{OPT}$. Since there is a feasible strategy profile that uses only edges from $G_{OPT}$, there are enough available paths. This completes the definition of $\bm{s}^*$.

We now prove that for every coalition of agents $C$, the profile $\bm{s}_{comb} = (\bm{s}_C,\bm{s}^*_{-C})$ is feasible. Let $e$ be an edge in $G_i$. Denote by $M$ the set of agents that use $e$ in $\bm{s}$, and let $M^*$ denote the set of agents that use $e$ in $\bm{s}^*$. The set of agents that use $e$ in $\bm{s}_{comb}$ is $(M \cap C) \cup (M^* \cap (N \backslash C))$, where $N$ is the set of all agents. If $e$ is used only in $\bm{s}$, it cannot exceed its capacity in $\bm{s}_{comb}$, since $M \cap C \subseteq M$, and the profile $\bm{s}$ is feasible. The same applies to edges that are used only in $\bm{s}^*$. It remains to consider edges that are used in both $\bm{s}$ and $\bm{s}^*$. In the definition of $\bm{s}^*$, every agent who uses the simple path in $G_i$ that contains $e$ in $\bm{s}$ is also assigned the same path in $\bm{s}^*$. It follows that $M \subseteq M^*$, and $(M \cap C) \cup (M^* \cap (N \backslash C)) \subseteq M^*$. Since $\bm{s}^*$ is a feasible solution, we conclude that $e$ does not exceed its capacity in $\bm{s}_{comb}$. Therefore, $\bm{s}_{comb}$ is feasible.
\end{proof}

\begin{proof}[Proof of Theorem~\ref{thm:SPoS-unbounded} for Undirected Graphs]
The extension of the theorem to undirected graphs has two parts. First, we have to show that the SE described in the proof is still a SE. Recall that in this profile, one agent uses the path $(s,e,c,d,b,t)$ and pays $0.6$, and the other agent uses the path $(s,a,b,t)$ and pays $R + 1.3$. Note that the lowest possible cost an agent can incur is still $0.6$, which means that the agent who pays $0.6$ will not participate in a deviation. The other agent cannot deviate to any path that excludes the edge of cost $R$. Hence, the profile is a SE.

Second, we have to show that when the edges are undirected, there are no new SE that exclude the edge $(b,t)$ that costs $R$. Note that any path in a SE does not contain a cycle that costs more than $0$. We analyze the strategy profiles that use edges in the opposite direction in order to look for new SE.

Using any of the edges that are adjacent to the source or the sink in the opposite direction results in a cycle of non-zero cost. Hence, this cannot occur in a SE.

Any path that uses the edge $(c,e)$ (in that direction) either contains the edge $(e,t)$ or the edge $(e,s)$. If an agent uses both $(c,e)$ and $(e,t)$, the edge $(s,e)$ cannot be used by the other agent, and deviating to the path $(s,e,t)$ is profitable. If the path of an agent contains the subpath $(c,e,s)$, the path must contain a cycle, which is impossible in a SE.

Any feasible profile in which an agent uses one of the edges $(b,a)$, $(c,a)$, or $(d,c)$ (in these directions) must contain a cycle that costs more than $0$.

Finally, the only feasible profile without cycles in which the edge $(b,d)$ is used (in that direction) contains the two paths $(s,a,b,d,t)$ and $(s,e,t)$. The costs of the paths are $1.8$ and $1$, respectively, and the agents can benefit by deviating to another profile, such as $(s,a,c,d,t)$ and $(s,a,b,t)$. We get that every SE must use the edge that costs $R$ (and $R$ can be arbitrarily high).
\end{proof}

\end{document}